\crefname{hypothesis}{Hypothesis}{Hypotheses}
\pgfplotsset{compat=1.14}
\definecolor{copper}{cmyk}{0,0.9,0.9,0.2} 
\colorlet{lightgray}{black!25} 
\colorlet{darkgray}{black!75} 
\title{Competitive Exclusion in a DAE Model for Microbial Electrolysis Cells}
\author{Harry J. Dudley\thanks{Department of Applied Mathematics, University of Colorado, Boulder,
CO 80309-0526,\ Corresponding authors: \email{harry.dudley@colorado.edu}, \email{dmbortz@colorado.edu}}
\and Zhiyong Jason Ren\thanks{Department of Civil and Environmental Engineering, Princeton University,
Princeton, NJ 08544, USA}
\and David M. Bortz\footnotemark[1]}
\begin{document}

\maketitle

\begin{abstract}
Microbial electrolysis cells (MECs) are devices that employ electroactive
bacteria to perform extracellular electron transfer, enabling hydrogen
generation from biodegradable substrates. In our previous work, we
developed and analyzed a differential-algebraic equation (DAE) model
for MECs. The model resembles a chemostat or continuous stirred tank reactor (CSTR). 
Equations are ordinary differential equations (ODEs) for concentrations of
substrate, microorganisms, and an extracellular mediator involved in electron transfer. 
There is also an algebraic constraint for electric current and hydrogen production.
Our goal is to determine the outcome of competition between methanogenic
archaea and electroactive bacteria, because only the latter contribute
to electric current and the resulting hydrogen production. We investigate
asymptotic stability in two industrially  relevant versions of the
model. An important aspect of many chemostat models is the principle
of competitive exclusion. This states that only microbes
which grow at the lowest substrate concentration will survive
as $t\to\infty$. We show that if methanogens can grow at the lowest
substrate concentration, then the equilibrium corresponding to competitive
exclusion by methanogens is globally asymptotically stable. The analogous
result for electroactive bacteria is not necessarily true. In fact
we show that local asymptotic stability of competitive exclusion by
electroactive bacteria is not guaranteed, even in a simplified version
of the model. In this case, even if electroactive bacteria can grow
at the lowest substrate concentration, a few additional conditions
are required to guarantee local asymptotic stability. We also provide
numerical simulations supporting these arguments. Our results suggest
operating conditions that are most conducive to success of electroactive
bacteria and the resulting current and hydrogen production in MECs. This will help
identify when methane production or electricity and hydrogen production
are favored.

\begin{keyword}
Microbial electrolysis; Competitive exclusion; Asymptotic stability; Differential-algebraic equation; \\ LaSalle's invariance principle.
\end{keyword}
\end{abstract}

\maketitle

\section{Introduction}

Microbial electrolysis cells (MECs) are an emerging technology that
employs microorganisms to recover energy and resources from organic
waste \cite{lu_microbial_2016}. Bacteria on an electroactive anode
biofilm oxidize biodegradable substrate and transfer electrons, thereby
generating electrical current and releasing protons (H$^{+}$) \cite{logan_microbial_2008}.
The protons then recombine to form hydrogen at the cathode. A small
voltage (0.2--0.8 V) is needed to overcome the thermodynamic
barrier, which is much lower than traditional water electrolysis (1.8--3.5
V) and can be supplied by a small solar panel, low-grade heat, or
microbial fuel cells (MFCs), all of which can be available onsite
\cite{lu_microbial_2016,logan_microbial_2008}. The gap of energy
input between microbial and pure electrochemical electrolysis is provided
by chemical energy stored in the organics. While the electroactive
bacteria facilitate hydrogen production, methanogenic archaea consume
the same substrate to produce methane, a product which is less
energy efficient \cite{lu_active_2016}. As a result, methanogenesis leads
to decreased efficiency of the system. MEC technology has several
advantages over other resource recovery and hydrogen production methods.
Microbial electrolysis reduces energy use compared to water splitting
because some of the energy is derived from embedded energy in the
waste biomass \cite{lu_unbiased_2019}. MECs are also more efficient
than other methods using renewable wastewater, such as fermentative
hydrogen production \cite{chookaew_two-stage_2014}. In fact, \cite{lu_hydrogen_2009}
demonstrated up to 96\% recovery of the maximum theoretical yield
of hydrogen in MECs operated using fermentation effluent.

In our previous work \cite{dudley_sensitivity_2019}, we analyzed
and validated a regular, semi-explicit, index 1 differential-algebraic
equation (DAE) model for a single substrate MEC \cite{pinto2011}.
The DAE system is an extended version of an ordinary differential
equation (ODE) model for chemostats, also known as continuous stirred
tank reactors (CSTRs). Besides an ODE system that describes the rate
of change of concentrations of the microorganism populations, the
biodegradable substrate, and an extracellular mediator involved in electron
transfer, the system also includes an algebraic constraint that relates
electric current through the external circuit to the concentrations
of the electroactive bacteria and the mediator molecule. This constraint
accounts for voltage losses that occur in practice. This construct
had been used previously \cite{pinto2011} to completely avoid solving
Maxwell's equations in a partial differential equation model. It is
also commonly used in chemical fuel cell models \cite{fuel_cell_2004}.
Our group demonstrated computationally that transcritical bifurcations
in the dilution rate determine whether electroactive bacteria or methanogens
or both will survive at the stable equilibria \cite{dudley_sensitivity_2019}.
The outcome of competition for substrate is a key question because
the types of microbes that exist at the system's stable equilibria
determine the electric current and hydrogen production rate. Hydrogen
production at the stable equilibrium is possible only if electroactive
bacteria are present to generate the needed current. Our efforts here
provide answers by characterizing stability of equilibria for two
versions of the MEC model, without reference to specific parameter
values.

Our goal is to build upon extensive mathematical literature on chemostats
to characterize stability of equilibria in the MEC model. One of the
main conclusions in the chemostat literature is that if one or more
microbes can grow at a lower substrate concentration than the others,
then there is a globally asymptotically stable equilibrium where only
those microbes have nonzero concentration. This phenomenon is often
referred to as competitive exclusion and it holds under a variety
of model assumptions. Unfortunately, the MEC analysis is complicated
by the fact that growth of the mixed culture bacteria is a nonlinear
function of two interdependent variables, the concentrations of both
substrate and mediator molecules. In spite of this, we demonstrate
that competitive exclusion by methanogens is globally asymptotically
stable and provide additional conditions which are necessary for local
asymptotic stability of competitive exclusion by electroactive bacteria. The latter suggests that the conditions for competitive exclusion by electroactive bacteria are not as straightforward.

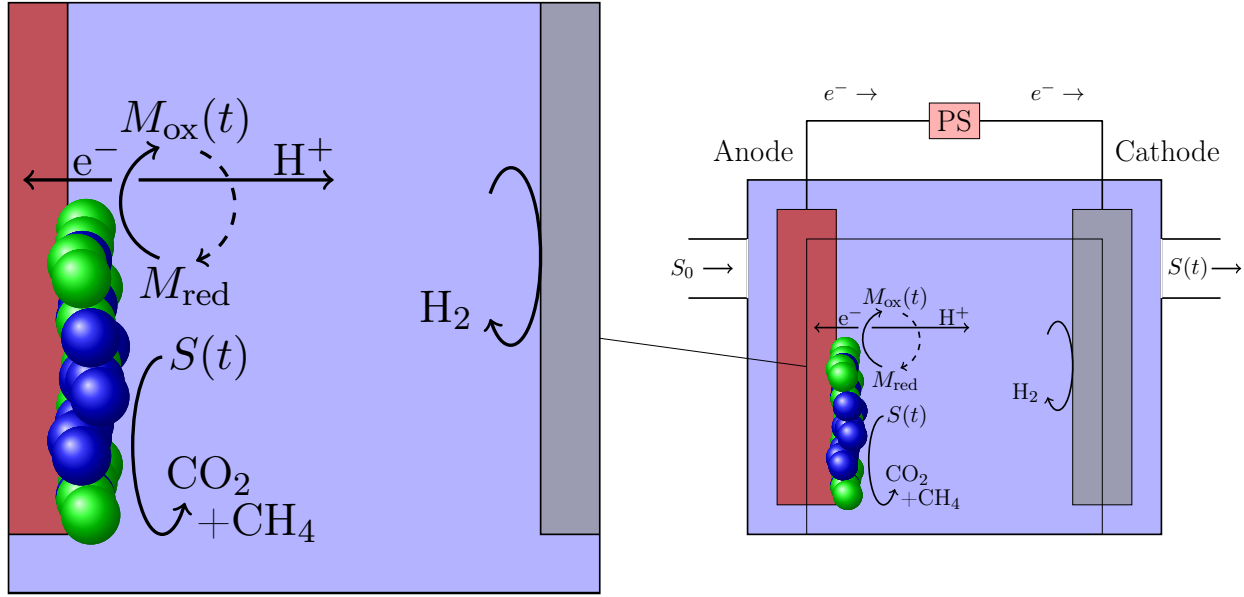
\begin{figure}
\centering{}\resizebox {\textwidth} {!} {
\begin{tikzpicture}[spy using outlines={rectangle,lens={scale=2}, size=10cm, connect spies}]

 \draw[thick,fill=blue,fill opacity = 0.3] (-1.5,1) rectangle (5.5,-5);

\draw[thick] (-2.5,0) -- (-1.5,0);     
\draw[thick] (-2.5,-1) -- (-1.5,-1);     
\draw[thick,white,opacity=1] (-1.5,0) -- (-1.5,-1);     
\draw[->, thick] (-2.25,-0.5) -- (-1.75,-0.5);     
\draw (-2.6,-0.5) node {$S_0$};          

\draw[thick] (5.5,0) -- (6.5,0);     
\draw[thick] (5.5,-1) -- (6.5,-1);     
\draw[thick,white,opacity=1] (5.5,0) -- (5.5,-1);     
\draw[->, thick] (6.35,-0.5) -- (6.85,-0.5);     
\draw (5.95,-0.5) node {$S(t)$};

\draw[fill=copper, fill opacity=0.75] (-1,0.5) rectangle (0,-4.5);     
\draw (-1.4,1.5) node {\Large Anode};          

\draw[fill=gray, fill opacity=0.75] (4,0.5) rectangle (5,-4.5);     
\draw (5.6,1.5) node {\Large Cathode};          

\draw[join = round, thick] (-0.5,0.5) -- (-0.5,2) -- (4.5,2) -- (4.5,0.5);     
\draw (2,2) node [rectangle, draw, fill=red!30] {\Large PS};    
\draw (0.25,2.5) node {$e^- \rightarrow$};     
\draw (3.75,2.5) node {$e^- \rightarrow$};  
      
\foreach \i in {0.2}{      
 \foreach \j in {-2,-2.1,...,-2.5}{      
  \pgfmathsetmacro{\dx}{rand*0.05};
  \pgfmathsetmacro{\dy}{rand*0.05};
  \pgfmathsetmacro{\rot}{rand*0.1};
  \pgfmathrandomitem{\RandomColor1}{MyRandomColors};        
  \shade[ball color=\RandomColor1] ({\i+\dx+\rot},{\j+\dy+0.4*sin(\i*3.1459265*10)}) circle(0.25cm);       
  \pgfmathrandomitem{\RandomColor2}{MyRandomColors};      
  \shade[ball color=\RandomColor2] (\i+\dx,{\j+\dy+0.4*sin(\i*3.1459265*10)-0.5}) circle(0.25cm);         
  \pgfmathrandomitem{\RandomColor3}{MyRandomColors};      
  \shade[ball color=\RandomColor3] (\i+\dx-\rot,{\j+\dy+0.4*sin(\i*3.1459265*10)-1}) circle(0.25cm); 
  \pgfmathrandomitem{\RandomColor4}{MyRandomColors};      
  \shade[ball color=\RandomColor4] (\i+\dx+\rot,{\j+\dy+0.4*sin(\i*3.1459265*10)-1.5}) circle(0.25cm);       
  \pgfmathrandomitem{\RandomColor5}{MyRandomColors};       
  \shade[ball color=\RandomColor5] (\i+\dx,{\j+\dy+0.4*sin(\i*3.1459265*10)-2}) circle(0.25cm);       
 }      
}    
 
            

\draw (1,-1) node {$M_\text{ox}(t)$};    
\draw[thick] (0.6,-1.4)+(45:.25) [yscale=2,xscale=2,<-] arc(110:250:.25);     
\draw (1,-2.4) node {$M_\text{red}$};     
\draw[thick,dashed] (1,-1.95)+(-70:.25) [yscale=2,xscale=2,<-] arc(-70:70:.25);           

\spy on (2,-2.5) in node at (-9,-1);              

\draw[->, thick] (.6,-1.5) -- (2.25,-1.5);     
\draw (2,-1.3) node {H$^+$};           

\draw[->, thick] (.375,-1.5) -- (-.375,-1.5);     
\draw (0.25,-1.3) node {e$^-$};          

\draw (1.2,-3) node {$S(t)$};     
\draw[thick] (0.8,-3) [yscale=3,->] arc(90:320:.25);     
\draw (1.2,-4) node {CO$_2$};     
\draw (1.6,-4.4) node {+CH$_4$};       

\draw[thick] (3.75,-2.5)+(-135:.25) [yscale=3,<-] arc(-135:135:.25);     
\draw (3.2,-2.6) node {H$_2$};      

\end{tikzpicture} 
}
\caption{Diagram of an MEC. An organic substrate flows into the MEC at concentration
$S_{0}$ and out at concentration $S(t)$. In the inner layer of the
anodic biofilm (highlighted by the red box), electroactive bacteria
(green spheres) oxidize the organic substrate and reduce an extracellular
mediator, $M$, thereby producing CO$_{2}$ and protons and transferring
electrons extracellularly. Methanogenic microorganisms (blue spheres)
also consume the substrate, producing CH$_{4}$ in addition to CO$_{2}$
and thereby decreasing efficiency of the system. Only methanogens
are present in the outer layer. Hydrogen is produced via a reduction
reaction as protons in solution react with electrons at the cathode.
An external voltage must be applied from a power source (labeled PS)
because the process is endothermic.\label{fig:Cartoon_MEC}}
\end{figure}

\subsection{Model\label{subsec:Model}}

Versions of the semi-explicit index 1 DAE model for the MEC have been
described previously in \cite{dudley_sensitivity_2019,pinto2011,pinto_two-population_2010}.
Our model differs from \cite{pinto2011} by not including fermenting
microorganisms that convert a complex substrate into a single compound
such as acetate. Competition among fermenting microbes is separate
from competition among electroactive bacteria and methanogens and
is not a factor in a single simple substrate MEC. Additionally, this model does not include a 
separate methanogen only biofilm layer on the anode. We consider the
following system, extended to include finitely many microbes of each
type:
\begin{align}
\dot{S}= & D\left(S_{0}-S\right) -\sum_{j=1}^{n_{m}}\frac{\mu_{m,j}(S)}{y_{m,j}}X_{m,j} -\sum_{j=1}^{n_{e}}\frac{\mu_{e,j}(S,M)}{y_{e,j}}X_{e,j},\label{eq:substrate}\\
\dot{X}_{m,j}= & \left(\mu_{m,j}\left(S\right)-K_{d,m,j}\right)X_{m,j},\ \text{for}\;j=1,2,\dots,n_{m},\label{eq:methanogen1}\\
\dot{X}_{e,j}= & \left(\mu_{e,j}\left(S,M\right)-K_{d,e,j}\right)X_{e,j},\ \text{for}\;j=1,2,\dots,n_{e},\label{eq:electroactive}\\
\dot{M}= & \gamma I-Y_{M}\sum_{j=1}^{n_{e}}\frac{\mu_{e,j}(S,M)}{y_{e,j}}X_{e,j},\label{eq:mediator}\\
0= & \Delta E-R_{\text{int}}(\mathbf{X}_{e})I-\frac{RT}{mF}\left[\ln\left(\frac{M_{\text{total}}}{M_{\text{total}}-M_{\text{}}}\right)+2\text{arcsinh}\left(\frac{I}{2A_{\text{sur,}A}I_{0}}\right)\right].\label{eq:current}
\end{align}
with initial conditions $0<S(0)\leq S_{0}$, $0<\sum_{j}X_{m,j}(0)$,
$0<\sum_{j}X_{e,j}(0)$, $0<M(0)<M_{\text{total}}$,
and $0<I(0)$. The differential equations represent concentrations,
so we are only interested in solutions with nonnegative concentrations.
We assume that initial substrate concentration, $S(0)$, is less than or equal to the influent concentration, $S_{0}$, and that initial current, $I(0)$, is positive, due to the nonzero electroactive bacteria concentration and a startup period. All of the model parameters are positive.

A single substrate with concentration $S$ flows into the tank at
constant rate $DS_{0}$, where $D=F/V$ is the flow rate per volume
and $S_{0}$ is the fixed influent substrate concentration. Let $\mathbf{X}_{m}\in\mathbb{R}^{n_{m}}$,
and $\mathbf{X}_{e}\in\mathbb{R}^{n_{e}}$
represent the concentrations of microorganisms. The anodic biofilm contains $n_{m}$
methanogen species with concentrations $X_{m,j}$ for $j=1,2,\dots,n_{m}$,
and $n_{e}$ electroactive bacteria species with concentrations $X_{e,j}$
for $j=1,2,\dots,n_{e}$. Substrate consumption is proportional
to monotonically increasing microbial growth rates, $\mu_{m,j}(S)$
or $\mu_{e,j}(S,M)$, with constants of proportionality $1/y_{m,j}$
or $1/y_{e,j}$, respectively. Each microbe also has a constant decay rate, 
$K_{d,m,j}$ or $K_{d,e,j}$. 

While methanogens only consume the substrate, electroactive bacteria
also consume the oxidized form of a mediator molecule, $M$, that
is involved in electron transfer. The mediator exists in oxidized
and reduced forms, $M_{\text{}}$ and $M_{\text{red}}$, respectively.
The mediator has a constant maximum concentration, $M_{\text{total}}=M_{\text{}}+M_{\text{red}}$,
Following \cite{pinto_two-population_2010}, electroactive
bacteria are assumed to transfer electrons via oxidation reduction
reactions of the form
\begin{gather}
S+M_{\text{}}\rightarrow M_{\text{red}}+\text{CO}_{2},\label{eq:reactions}\\
M_{\text{red}}\rightarrow M_{\text{}}+\text{e}^{-}+\text{H}^{+}.\label{eq:reactions-1}
\end{gather}
These reactions are represented in the diagram in Figure \ref{fig:Cartoon_MEC}.
The oxidized mediator is replenished at a rate proportional to the
electric current in the device, $I$, as the reduced mediator molecules
transfer electrons to the anode. $Y_{M}$ is the mediator yield of
the reactions (\ref{eq:reactions}) and (\ref{eq:reactions-1}).

The electric current is related to hydrogen production \cite{dudley_sensitivity_2019}.
This current can be determined by accounting for the voltage losses
in the system. Microbial electrolysis is endothermic, so some small
external voltage, $E_{\text{applied}}$, is required. The applied
voltage may be opposed by some counter-electromotive force, $E_{\text{CEF}}$.
There are also several sources of voltage losses that occur in practice.
Ohmic losses, $\eta_{\text{ohm}}$, arise from various types of resistance
in the circuit. Activation losses, $\eta_{\text{act}}$, arise from
the activation energy of the oxidation-reduction reactions occurring
in the cell. Concentration losses, $\eta_{conc}$, arise from certain
processes that limit the concentration of reactants at the anode and
the cathode \cite{fuel_cell_2004,logan_microbial_2006}. All of this
can be expressed in the following electrochemical balance equation,
\begin{equation}
\Delta E\coloneqq E_{\text{applied}}-E_{\text{CEF}}=\eta_{\text{ohm}}\left(\mathbf{X}_{e},I\right)+\eta_{\text{act,A}}+\eta_{\text{act,C}}\left(I\right)+\eta_{\text{conc,A}}\left(M\right)+\eta_{\text{conc,C}}\label{eq:electrochemicalBalance}
\end{equation}
where subscripts A and C represent the anode and cathode, respectively.
Previous models have ignored concentration losses at the cathode,
$\eta_{\text{conc,C}}$, because hydrogen molecules should diffuse
away from the cathode rapidly. They have also neglected activation
losses at the anode, $\eta_{\text{act,A}}$, under the assumption
that the MEC operates with higher voltage losses at the cathode. However,
these voltage losses could be included by the general constraint in
Section \ref{sec:Simplified-model}. Ohmic losses can be calculated
from Ohm's law, $\eta_{\text{ohm}}\left(\mathbf{X}_{e},I\right)=R_{\text{int}}(\mathbf{X}_{e})I$
where $R_{\text{int}}(\mathbf{X_{e})}$ is the internal resistance.
$R_{\text{int}}(\mathbf{X_{e})}$ is a decreasing function of the
total electroactive bacteria population because less electroactive
bacteria is effectively greater resistance in the circuit. Also, $R_{\text{min}}\leq R_{\text{int}}(\mathbf{X}_{e})\leq R_{\text{max}}$
with maximum resistance when there are no electroactive bacteria. Following
\cite{kato_marcus_conduction-based_2007}, concentration losses at
the anode are modeled by the Nernst equation, 
\begin{equation}
\eta_{\text{conc},A}\left(M\right)=\frac{RT}{mF}\ln\left(\frac{M_{\text{total}}}{M_{\text{total}}-M_{\text{}}}\right),\label{eq:Nernst-1}
\end{equation}
where $R$ is the ideal gas constant, $T$ is the temperature, $m$
is the number of moles of electrons transferred per mole of mediator,
and $F$ is Faraday's constant. Equation (\ref{eq:Nernst-1}) assumes
that the reference reduced mediator concentration is equal to the
total extracellular mediator concentration, $M_{\text{total}}$ \cite{pinto2011,pinto_two-population_2010}.
Activation losses at the cathode are calculated using an approximation
to the Butler-Volmer equation for the relationship between electric
current and potential at an electrode \cite{pinto2011}. Standard
simplifying assumptions are that the reaction occurs in one step and
that the symmetry coefficient (or the fraction of activation loss
that affects the rate of electrochemical transformation) is $\beta=0.5$.
With these assumptions we can write 
\begin{equation}
\eta_{\text{act},C}\left(I\right)=2\frac{RT}{mF}\text{arcsinh}\left(\frac{I}{2A_{\text{sur,}A}I_{0}}\right),\label{eq:Butler-Volmer-1}
\end{equation}
where $A_{\text{sur,}A}$ is the anode surface area and $I_{0}$ is
the reference exchange current density. See \cite{noren_clarifying_2005}
for a discussion of approximations to the Butler-Volmer equation.

The following section discusses previous work that has characterized
the equilibria of ODE systems resembling equations (\ref{eq:substrate})
- (\ref{eq:methanogen1}) when $\mathbf{X}_{e}\equiv\mathbf{0}$, $M=0$, and $I=0$. 
This paper extends some of those results to analyze
local asymptotic stability in a DAE system with additional equations
(\ref{eq:electroactive}) - (\ref{eq:current}), representing current
production by electroactive bacteria bacteria using an extracellular
mediator.

\subsection{Mathematical Background \label{subsec:Mathematical-Background}}

There is a large body of literature proving that competitive exclusion
occurs for ODE models resembling equations (\ref{eq:substrate}) -
(\ref{eq:methanogen1}) with $\mathbf{X}_{e}\equiv\mathbf{0}$, $M=0$, and $I=0$.
Essentially, stable equilibria may exhibit either
competitive exclusion (one species remains), coexistence (multiple
species remain), or total extinction (no species remain). Competitive
exclusion is generic because coexistence requires multiple species
to share an identical parameter value and extinction requires all
species to be inadequate competitors. \cite{hsu_mathematical_1977}
proved the competitive exclusion principle for an ODE model of chemostats
with microbial growth determined by Monod kinetics. \cite{hsu_limiting_1978}
then provided a more elegant proof using a Lyapunov function to guarantee
global stability. These papers showed that, if survival was possible
at all, only the microorganism(s) that could grow at the lowest substrate
concentration would survive at the stable equilibria. The work also
showed that coexistence was only possible if multiple species could
grow at the same smallest substrate concentration. \cite{hansen_single-nutrient_1980}
provided experiments verifying the theory of competitive exclusion.
Subsequently, dozens of authors have proven competitive exclusion
occurs in chemostats with various special assumptions. See the monograph
\cite{smith_theory_1995} or the more recent paper \cite{sari_global_2011}
for more details. Besides microbial growth determined by Monod kinetics,
we will also clarify our results by focusing on a simplified case of general monotonically increasing growth rates with equal washout rates and a general constraint \cite{armstrong_competitive_1980}.

To explain microbial electrolysis, the DAE model includes a differential
equation for an extracellular mediator involved in electron transfer
to the anode as well as an algebraic constraint that determines
the electric current. The constraint turns this model into a regular,
semi-explicit, index 1 DAE system which does not fit into the previous
ODE frameworks. In particular, the constraint (\ref{eq:current})
used in \cite{dudley_sensitivity_2019,pinto2011} requires a local
representation; global results may not be possible unless the electric
current constraint can be solved globally for the electrical current,
$I$. In this paper, we extend results from the chemostat literature
\cite{hsu_limiting_1978,armstrong_competitive_1980} to analyze local
asymptotic stability in the DAE system given by (\ref{eq:substrate})
- (\ref{eq:current}). The following section provides an overview
of the structure of the rest of the paper.

\subsection{Overview\label{subsec:Overview}}

Section \ref{sec:Asymptotic-stability-DAEs} reviews asymptotic stability
in semi-explicit DAEs, which is essential to the analysis in the following
sections. In particular, we review the relationship between local
asymptotic stablility and the spectrum of the matrix pencil, as well
as LaSalle's invariance principle for global stability. In Section
\ref{sec:Simplified-model}, we use the spectrum of the matrix pencil
of the DAE to characterize local asymptotic stability of equilibria
in a simplified model with 1 species of each type, general monotonically increasing
kinetics, equal decay rates, and a general constraint. This reveals
that competitive exclusion by electroactive bacteria is not locally
asymptotically stable unless the spectrum of the matrix pencil satisfies
certain conditions. Section \ref{sec:Monod-model} proves that competitive
exclusion by methanogens is globally asymptotically stable for the
full MEC system (\ref{eq:substrate}) - (\ref{eq:current}) with multiplicative
Monod kinetics, different decay rates, and a constraint based on the
Nernst and Butler-Volmer equations. However, the corresponding result for electroactive bacteria is not likely to
be true, as illustrated in the simple case in
Section \ref{sec:Simplified-model}. Numerical simulations supporting Theorem \ref{thm:competitive-exclusion-by-methanogens}
and Corollary \ref{thm:(Coexistence-of-Methanogens)} are provided
in Section \ref{sec:Numerical-Simulations}. The conclusion in Section
\ref{sec:Conclusion} summarizes results and points out that the conditions
in Section \ref{sec:Simplified-model} can be used to evaluate numerically
whether competitive exclusion by electroactive bacteria will be locally
asymptotically stable or not. The conclusion also indicates that MEC
operators will want to avoid operating conditions where methanogens
can survive at the lowest substrate value because those conditions
make competitive exclusion by methanogens globally asymptotically
stable. 

\section{Asymptotic stability in semi-explicit DAEs\label{sec:Asymptotic-stability-DAEs}}

Before presenting results on asymptotic stability of MEC equilibria
corresponding to extinction and competitive exclusion, we will briefly
review methods for determining asymptotic stability in DAEs \cite{hill_stability_1990,riaza_2008}.
The DAE framework is necessary because the constraint (\ref{eq:current})
does not admit a global solution. The MEC system in equations (\ref{eq:substrate})
- (\ref{eq:current}) can be represented as a semi-explicit DAE, 
\begin{equation}
\begin{split}\dot{x} & =f(x,y),\\
0 & =g(x,y),
\end{split}
\label{eq:semi-explicit-DAE}
\end{equation}
where $f:\mathbb{R}^{r}\times\mathbb{R}^{p}\to\mathbb{R}^{r}$ and
$g:\mathbb{R}^{r}\times\mathbb{R}^{p}\to\mathbb{R}^{p}$.
More generally, (\ref{eq:semi-explicit-DAE}) can be viewed as a quasilinear
DAE,
\begin{equation}
A(z)\dot{z}=F(z),\label{eq:quasilinear-DAE}
\end{equation}
where $A\in C^{2}(W_{0},\mathbb{R}^{r+p\times r+p})$ and $F\in C^{2}(W_{0},\mathbb{R}^{r+p})$.
Both of these perspectives will be useful. For quasilinear DAEs (\ref{eq:quasilinear-DAE}),
local asymptotic stability of equilibria can be determined from the
spectrum of the matrix pencil, $\{\sigma A(z)-F'(z):\sigma\in\mathbb{C}\}$
\cite{riaza_2008,marz_practical_1991,beardmore_stability_1998,riaza_stability_2002}.
An equilibrium point $z^{*}$ of a regular DAE is asymptotically stable
if $\text{Re}(\sigma)<0$ for all elements in $\{\sigma\in\mathbb{C}:\ \text{det\ensuremath{\left(\ensuremath{\sigma}A(z^{*})-F'(z^{*})\right)}=0\}}$
\cite{riaza_2008}. In the following section, we will use the spectrum
of the matrix pencil to analyze local asymptotic stability in a simplified
model with 1 species of each type, general monotone kinetics, equal
decay rates, and a general constraint. The analysis shows that competitive
exclusion by methanogens is locally asymptotitcally stable, but the
corresponding result is not necessarily true for competitive exclusion
by electroactive bacteria.

In the following, we assume that there is some open connected set
$\Omega\subset\mathbb{R}^{r+p}$ on which $f$ and $g$ are twice
continuously differentiable, and $g_{z}$ is nonsingular on $\Omega$.
\begin{condition}\label{DAE-requirements}  Suppose that, for some open, connected set $\Omega \in \mathbb{R}^{r+p}$, the following assumptions hold:
\begin{enumerate}
\item $f, g \in C^2(\Omega)$;
\item $g_z(y,z)$ is nonsingular on $\Omega$.
\end{enumerate}
\end{condition} 
Points where $g_{z}$ is nonsingular are called \textit{regular.}
Under the assumption of nonsingularity of $g_{z}$ on all of $W_{0}$,
(\ref{eq:semi-explicit-DAE}) is a \textit{regular DAE}, with index
1. By Condition \ref{DAE-requirements}, the implicit function
theorem allows one to describe $g(y,z)=0$ as $z=\psi(y)$ on some
open neighborhood of $y^{*}$ in $\mathbb{R}^{r}$,
where $\psi$ is twice continuously differentiable. At least locally
near $(y^{*},z^{*})$, dynamics of the differential $y$-variables
can be described by a \textit{reduced ODE, }
\begin{equation}
\dot{y}=f(y,\psi(y)).\label{eq:reduced-ODE}
\end{equation}
Solutions of (\ref{eq:reduced-ODE}) which satisfy the constraint,
$z=\psi(y)$, are solutions of the semi-explicit DAE (\ref{eq:semi-explicit-DAE})
\cite{hill_stability_1990,riaza_2008}.

We will also use the following version of LaSalle's Invariance principle
\cite{lasalle_extensions_1960} to analyze global asymptotic stability.
A well known property of regular semi-explicit, index 1 DAEs (\ref{eq:semi-explicit-DAE})
is that they define a smooth vector field on a smooth manifold \cite{riaza_2008}.
\begin{theorem}[LaSalle's invariance principle]
\label{thm:LaSalle} Consider the smooth dynamical system on an $n-$manifold given by $\dot{x} = X(x)$ and let $\Omega$ be a compact set in the manifold that is (positively) invariant under the flow of $X$. Let $V: \Omega \to \mathbb{R}$, $V \geq 0$, be a $C^1$ function such that \[   \dot{V}(x) = \frac{\partial V}{\partial x} \cdot X \leq 0 \] in $\Omega$. Let $W$ be the largest invariant set in $\Omega$ where $\dot{V}(x) = 0$. Then every solution with initial point in $\Omega$ tends asymptotically to $W$ as $t \to \infty$. In particular, if $W$ is an isolated equilibrium, it is asymptotically stable.
\end{theorem}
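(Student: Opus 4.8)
The plan is to run the classical $\omega$-limit set argument, adapted to the manifold setting. Fix an initial point $x_0 \in \Omega$ and let $\phi_t(x_0)$ denote the solution through it. Since $X$ is smooth and $\Omega$ is compact and positively invariant, $\phi_t(x_0)$ exists for all $t \geq 0$ and remains in $\Omega$ (no finite-time escape is possible inside a compact set), so the forward orbit is precompact. Along this orbit $\frac{d}{dt}V(\phi_t(x_0)) = \dot V(\phi_t(x_0)) \leq 0$, so $t \mapsto V(\phi_t(x_0))$ is non-increasing; being bounded below by $0$, it converges to some limit $c = c(x_0) \geq 0$ as $t \to \infty$.

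Next I would invoke the standard structure of the $\omega$-limit set $\omega(x_0)$: because the forward orbit lies in the compact set $\Omega$, the set $\omega(x_0)$ is nonempty, compact, connected, invariant under the flow, and $\mathrm{dist}\!\left(\phi_t(x_0), \omega(x_0)\right) \to 0$ as $t \to \infty$. These are metric statements, so once an auxiliary Riemannian metric is fixed on $\Omega$ they transfer verbatim from the Euclidean case. The key observation is that $V \equiv c$ on $\omega(x_0)$: given $y \in \omega(x_0)$, choose $t_n \to \infty$ with $\phi_{t_n}(x_0) \to y$, and continuity of $V$ gives $V(y) = \lim_n V(\phi_{t_n}(x_0)) = c$. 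Since $\omega(x_0)$ is invariant, the entire trajectory through any $y \in \omega(x_0)$ stays in $\omega(x_0)$, where $V$ is constant; differentiating along it yields $\dot V(y) = 0$. Thus $\omega(x_0)$ is an invariant subset of $\{x \in \Omega : \dot V(x) = 0\}$, hence $\omega(x_0) \subseteq W$ by maximality of $W$. Combined with the attraction property, $\mathrm{dist}\!\left(\phi_t(x_0), W\right) \to 0$, which is the asserted convergence to $W$.

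For the final assertion, suppose $W = \{x^*\}$ is an isolated equilibrium and (as in our applications) $\Omega$ is a neighborhood of it. The argument above already delivers attractivity: every solution starting in $\Omega$ converges to $x^*$. Lyapunov stability then follows from monotonicity of $V$: the sublevel sets $\{x \in \Omega : V(x) \leq c\}$ are positively invariant, and because $x^*$ is the only invariant point at which $\dot V$ vanishes one argues that these sublevel sets shrink to $\{x^*\}$ as $c$ decreases to $\inf_{\Omega} V$, giving a positively invariant neighborhood basis at $x^*$; hence trajectories starting near $x^*$ remain near $x^*$. Attractivity together with stability is asymptotic stability.

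I expect the main obstacle to be two bookkeeping matters rather than anything deep: first, carefully porting the $\omega$-limit set machinery to the manifold (which requires only fixing a metric on $\Omega$ and checking that nonemptiness, invariance, and attraction are metric/local facts); and second, genuinely closing the gap between ``attracts every point of $\Omega$'' and Lyapunov stability in the last sentence — attractivity alone does not imply stability in general, so one must actually use $V$ and the hypothesis that $W$ collapses to an isolated point, rather than treating that clause as automatic.
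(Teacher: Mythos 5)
The paper does not prove this theorem at all: it is quoted from LaSalle's 1960 paper (and stated in the form found in standard references) and then simply applied in Section \ref{sec:Monod-model}, so there is no in-paper argument to compare yours against. That said, your $\omega$-limit set argument is the standard and correct proof of the main convergence claim: precompactness of the forward orbit, monotone convergence of $V$ along it, constancy of $V$ on $\omega(x_0)$, invariance of $\omega(x_0)$ forcing $\dot V\equiv 0$ there, and hence $\omega(x_0)\subseteq W$ by maximality, followed by the attraction property of $\omega$-limit sets. The remark that the manifold setting costs nothing beyond fixing a metric is also right. The one place your sketch is genuinely thin is the final clause, and you correctly identify it yourself: your claim that the sublevel sets $\{V\le c\}$ ``shrink to $\{x^*\}$'' as $c$ decreases implicitly assumes $V$ has a strict local minimum at $x^*$ (positive definiteness relative to $x^*$), which is not among the stated hypotheses ($V\ge 0$ and $\dot V\le 0$ alone do not give it). Either add that hypothesis---which holds for the Lyapunov function (\ref{eq:Lyapunov}) actually used in Theorem \ref{thm:competitive-exclusion-by-methanogens}---or replace the sublevel-set argument with one that derives Lyapunov stability from uniform attraction of the compact positively invariant neighborhood $\Omega$; as written, that step is asserted rather than proved, and it is exactly the gap you flagged.
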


The function $V$ in Theorem \ref{thm:LaSalle} is called a \textit{Lyapunov
function }because LaSalle's theorem generalizes one due to Lyapunov
where $\dot{V}$ must be strictly less than zero. Section \ref{sec:Monod-model}
applies a Lyapunov function modified from \cite{hsu_limiting_1978}
to the semi-explicit DAE with finitely many species, multiplicative
Monod kinetics, different decay rates, and a constraint that is solvable
on $\Omega$.

\section{Local asymptotic stability in a simplified model \label{sec:Simplified-model}}

In this section, we consider local asymptotic stability of equilibrium
points corresponding to extinction and competitive exclusion in a
simplified MEC model. In contrast to chemostats, competitive exclusion
by electroactive bacteria is not necessarily locally asymptotically
stable, even when electroactive bacteria can grow at the lowest substrate
concentration. This is due to several issues, including the nonlinear
dependence of the growth of electroactive bacteria on both mediator
and substrate concentrations, as well as the form of the algebraic
constraint which determines the electric current. We present this
analysis to promote clarity in in Section \ref{sec:Monod-model}.

Here we simplify the model by assuming that there is one compartment for each type of microbe
(i.e., $n_{m}=n_{e}=1$), that the biofilm decay rates
are equal to the dilution rate (i.e., $K_{d,m,1}=K_{d,e,1}=D$).
We allow for general monotonically increasing kinetics
with $\mu_{m,1}(S)$ and $\mu_{e,1}(S,M)$. We also allow for a general
constraint, $0=g(X_{e,1},M,I)$, with certain reasonable derivative
conditions summarized below. These assumptions allow concise conditions
for local asymptotic stability. The presentation of results is simplified
if we rescale the model variables. We set
\[
t=\frac{\tau}{D},\quad S=sS_{0},\quad M=mY_{M},\quad X_{m,1}=x_{m}S_{0}y_{m,1}, \quad X_{e,1}=x_{e}S_{0}y_{e,1}.
\]
This yields a system of the form
\begin{align}
\dot{s}=\  & (1-s)-f_{m}\left(s\right)x_{m}-f_{e}\left(s,m\right)x_{e},\label{eq:rescaled-substrate}\\
\dot{x}_{m}=\  & \left(f_{m}\left(s\right)-1\right)x_{m},\label{eq:rescaled-methanogen-1}\\
\dot{x}_{e}=\  & \left(f_{e}\left(s,m\right)-1\right)x_{e},\label{eq:rescaled-electroactive}\\
\dot{m}=\  & \Gamma I-f_{e}\left(s,m\right)x_{e},\label{eq:rescaled-mediator}\\
0=\  & g(x_{e},m,I),\label{eq:rescaled-constraint}
\end{align}
where 
\[
f_{m}(s)\coloneqq D^{-1}\mu_{m,1}(sS_{0}),\quad f_{e}(s,m)\coloneqq D^{-1}\mu_{e,1}(sS_{0},mY_{M}),\quad\Gamma\coloneqq\gamma/Y_{M},
\]
and $g(x_{e},m,I)$ represents the rescaled constraint. Derivatives
are taken with respect to rescaled time $\tau$. Note that we can
form the new variable $u=s+x_{m}+x_{e}$ to obtain the
system
\begin{align*}
\dot{u}=\  & 1-u,\\
\dot{m}=\  & \Gamma I-S_{0}f_{e}\left(s,m\right)x_{e},\\
0=\  & g(x_{e},m,I).
\end{align*}
Since $u$ must satisfy $u(t)=1+Ce^{-t}$, solutions of $u(t)$ will
approach $1$ as $t\to\infty$. Therefore, any asymptotically stable
equilibria will satisfy $u=s+x_{m}+x_{e}=1$. We focus
on equilibria corresponding to extinction, where $s=1$, and competitive
exclusion, where either $s+x_{m}=1$ or $s+x_{e}=1$.

In this section, we will allow for general monotonically increasing
growth functions and a general constraint. We consider a class of
kinetics where the growth rates and substrate consumption rates of
the microbes will increase with substrate concentration and also mediator
concentration in the case of electroactive bacteria. We require that attainable equilibria exist. We also make
several physically reasonable assumptions based on (\ref{eq:current}).
These assumptions state that ohmic voltage losses in $g$ will decrease
with $x_{e}$ (because a decrease in electroactive bacteria concentration
is effectively an increase in resistance in the circuit), concentration
losses will increase with the oxidized mediator $m$ as the reduced
mediator becomes limited at the anode, and that activation losses
will increase with $I$. Additionally, the absence of electroactive bacteria
means that no electric current is present. Finally we require unique solutions to $g$ at two points. The requirements are summarized
by the following condition. 
\begin{condition}\label{C1}  The kinetics and constraint in this section satisfy:
\begin{enumerate}
\item $f_{m}(0) = f_{e}(0,m) = f_e(s,0) = 0$;
\item $f_{m}(s)$ and $f_{e}(s,m)$ are continuously differentiable and monotonically increasing;
\item $\exists$ $\lambda_{m}$ and $\lambda_e(m)<1$ such that $f_{m}(\lambda_{m})=1$ and $f_e(\lambda_e(m),m)=1$ for some $m \in (0,m_0]$;
\item $\frac{\partial g}{\partial x_{e}}>0$ and $\frac{\partial g}{\partial m},\frac{\partial g}{\partial I}<0$;
\item $x_e=0 \Rightarrow I=0$;
\item $\exists$ a unique positive $m_0>0$ such that $0=g(0,m_0,0)$;
\item $\exists$ a unique positive $m^*>0$ such that $0=g\left(1-\lambda_{e}(m^{*}),m^{*},\frac{1-\lambda_{e}(m^{*})}{\Gamma}\right)$.
\end{enumerate}
\end{condition}

The rescaled model (\ref{eq:rescaled-substrate}) - (\ref{eq:rescaled-constraint})
has several important equilibrium points corresponding to extinction
of all microbes or competitive exclusion by one type of microbe. These
depend on the substrate (and mediator) values where each microbe attains
zero net growth. $\lambda_{m}$ and $\lambda_{e}(m)$
are the substrate concentrations where $f_{m}\left(s\right)=1$ and $f_{e}\left(s,m\right)=1$, respectively.
In the case of the electroactive bacteria, there is a curve $s=\lambda_{e}(m)$
that satisfies $f_{e}\left(s,m\right)=1$. In Section 4, Figure \ref{fig:lambda-contours}
depicts what the curve would look like for a system with multiplicative
Monod kinetics. Biologically meaningful solutions will be located
in the interval $(0,1)$, which corresponds to attainable substrate
concentrations. If $\forall m$, $\lambda_{m},\lambda_{e}(m)\leq0$,
then no microbe can ever have positive net growth and if $\lambda_{m},\lambda_{e}(m)\geq1$,
then all microbes require more substrate than is available given the
influent substrate concentration. The model equilibria exhibiting
extinction and competitive exclusion are shown in Table \ref{tab:Equilibrium-points-rescaled}.
The mediator concentrations in these equilibria are the solutions
to the constraint, $0=\ g(x_{e},m,I)$, at the corresponding points.
In particular, $m_{0}$ is the unique positive solution to $0=g\left(0,m_{0},0\right)$
and $m^{*}$ is the unique positive solution to $0=g\left(1-\lambda_{e}(m^{*}),m^{*},\frac{1-\lambda_{e}(m^{*})}{\Gamma}\right)$.
\begin{table}[H]
\begin{centering}
\begin{tabular}{|c|c|}
\hline 
Equilibrium Point & Biological Meaning\tabularnewline
\hline 
\hline 
$p_{0}\coloneqq\left(1,0,0,m_{0},0\right)$ & Extinction of all microbes\tabularnewline
\hline 
$p_{m}\coloneqq\left(\lambda_{m},1-\lambda_{m},0,m_{0},0\right)$ & Competitive exclusion by Methanogens\tabularnewline
\hline 
$p_{e}\coloneqq\left(\lambda_{e}(m^{*}),0,1-\lambda_{e}(m^{*}),m^{*},\frac{1-\lambda_{e}(m^{*})}{\Gamma}\right)$ & Competitive exclusion by Electroactive\tabularnewline
\hline 
\end{tabular}
\par\end{centering}
\caption{Equilibrium points representing extinction and competitive exclusion
in the MEC system (\ref{eq:rescaled-substrate}) - (\ref{eq:rescaled-constraint})
with kinetics and constraints satisfying Condition \ref{C1}.
\label{tab:Equilibrium-points-rescaled}}
\end{table}

The following mutually exclusive cases make one of the equilibrium
points locally asymptotically stable. In the final case, local asymptotic
stability of the electroactive-only equilibrium also depends on a
discriminant that appears in two elements of the spectrum of the matrix
pencil at $p_{e}$; we denote this discriminant by 
\begin{equation}
\delta\coloneqq\left[\left(x_{e}^{*}\frac{\partial f_{e}}{\partial m}\frac{\partial g}{\partial I}+x_{e}^{*}\frac{\partial f_{e}}{\partial s}\frac{\partial g}{\partial I}+\Gamma\frac{\partial g}{\partial m}\right)^{2}-4\frac{\partial g}{\partial I}x_{e}^{*}\left(\Gamma\frac{\partial f_{e}}{\partial s}\frac{\partial g}{\partial m}+\Gamma\frac{\partial f_{e}}{\partial m}\frac{\partial g}{\partial x_{e}}+\frac{\partial f_{e}}{\partial m}\frac{\partial g}{\partial I}\right)\right]\big|_{p_{e}}\label{eq:discriminant}
\end{equation}
where $x_{e}^{*}=1-\lambda_{e}(m^{*})$.
\begin{enumerate}
\item (Total Extinction): If $\lambda_{m},\lambda_{e}(m_{0})\notin(0,1)$,
then $p_{0}$ is locally asymptotically stable.
\item (Competitive Exclusion by Methanogens): If $\lambda_{m}<\lambda_{e}\left(m_{0}\right)$, then $p_{m}$
is locally asymptotically stable.
\item (Competitive Exclusion by Electroactive): If $\lambda_{e}(m^{*})<\lambda_{m}$, and either (1) $\delta<0$
or (2) $\text{Re}\left(\sqrt{\delta}\right)<-\left(x_{e}^{*}\frac{\partial f_{e}}{\partial m}\frac{\partial g}{\partial I}+x_{e}^{*}\frac{\partial f_{e}}{\partial s}\frac{\partial g}{\partial I}+\Gamma\frac{\partial g}{\partial m}\right)\big|_{p_{e}}$,
then $p_{e}$ is locally asymptotically stable.
\end{enumerate}
The extinction equilibrium, $p_{0}$, will be unstable in general
because we expect at least one microbe to be an adequate competitor
with a $\lambda_\star$ value in $\left(0,1\right)$. The spectrum of the
matrix pencil at $p_{0}$ is 
\[
\left\{ -1,f_{m}\left(1\right)-1,f_{e}\left(1,m_{0}\right)-1,-\Gamma\frac{\partial g}{\partial m}/\frac{\partial g}{\partial I}\big|_{\left(0,m_{0},0\right)}\right\} ,
\]
so $p_{0}$ is unstable as long as either $\lambda_{m}$
or $\lambda_{e}(m_{0})$ are in the interval $\left(0,1\right)$.
In this case, a microbe introduced into the system may grow in the
presence of plentiful substrate. If the methanogen can grow
at the lowest substrate value, then the corresponding methanogen-only
equilibrium, $p_{m}$, will be locally asymptotically
stable. The spectrum of the matrix pencil at $p_{m}$ is
\[
\left\{ -1,\left(\lambda_{m}-1\right)f_{m}'\left(\lambda_{m}\right)-1,f_{e}\left(\lambda_{m},m_{0}\right)-1,-\Gamma\frac{\partial g}{\partial m}/\frac{\partial g}{\partial I}\big|_{\left(0,m_{0},0\right)}\right\}. 
\]
If Case 2 holds, then only $x_{m}$ will be able to attain
positive net growth near the corresponding equilibrium. Finally, the
spectrum of the matrix pencil at $p_{e}$ is $\left\{ \sigma_{1},\sigma_{2},\sigma_{3},\sigma_{4}\right\} $
where
\begin{align*}
\sigma_{1} & =\ -1,\\
\sigma_{2} & =\ f_{m}\left(\lambda_{e}(m^{*})\right)-1,\\
\sigma_{3,4} & =\ \frac{-x_{e}^{*}\frac{\partial f_{e}}{\partial m}\frac{\partial g}{\partial I}-x_{e}^{*}\frac{\partial f_{e}}{\partial s}\frac{\partial g}{\partial I}-\Gamma\frac{\partial g}{\partial m}\pm\sqrt{\delta}}{2\frac{\partial g}{\partial I}}\big|_{p_{e}}.
\end{align*}
If Case 3 holds, then $p_{e}$ is locally asymptotically stable.

Cases 1 - 3 provide conditions for local asymptotic stability of each
of the equilibria points $p_{0},\ p_{m},$ and $p_{e}$.
These apply for any continuously differentiable and monotonically
increasing growth functions $f_{m}(s)$ and $f_{e}(s,m)$
and any $g$ that satisfies Condition \ref{C1}. The conditions
in Cases 1 - 3 can be checked numerically to determine if a parametrized
model permits a locally asymptotically stable equilibrium where only
the most competitive electroactive species persists. Unlike the methanogen-only
equilibria, local asymptotic stability of the electroactive-only equilibria
is not guaranteed when the electroactive bacteria can survive at the
lowest substrate value. This means it is unlikely that chemostat results
extend to electroactive-only equilibria in an MEC. However, we can
assert that for Monod kinetics with different decay rates and a constraint
based on the Nernst and Butler-Volmer equations, methanogen-only equilibria
are globally asymptotically stable when methanogens can grow at the
lowest substrate concentration. The proof of this assertion in Section
\ref{sec:Monod-model} relies on LaSalle's invariance principle.

\section{Global asymptotic stability with Monod kinetics \label{sec:Monod-model}}

In this section, we consider the full MEC system (\ref{eq:substrate})
- (\ref{eq:current}) with multiplicative Monod kinetics, different
decay rates, and a constraint based on the Nernst and Butler-Volmer
equations. LaSalle's invariance principle allows us to show that competitive
exclusion by methanogens is globally asymptotically stable. The proof
uses a Lyapunov function adapted from \cite{hsu_limiting_1978}.
Suppose that growth rates for the methanogens are
\begin{equation}
\mu_{m,j}(S) =\mu_{\text{max},m,j}\left(\frac{S}{K_{S,m,j}+S}\right), \label{eq:methanogen-growth-1}\\
\end{equation}
and growth rates for the electroactive bacteria are 
\begin{equation}
\mu_{e,j}(S,M) =\mu_{\text{max},e,j}\left(\frac{S}{K_{S,e,j}+S}\right)\left(\frac{M}{K_{M,j}+M}\right).\label{eq:electroactive-growth}
\end{equation}
$\mu_{\text{max},m,j}$ and $\mu_{\text{max},e,j}$
are the maximum growth rates; $K_{S,m,j}$
and $K_{S,e,j}$, are half rate constants for consumption of substrate;
$K_{M,j}$ is the half rate constant for consumption of mediator.
As before, the equilibria depend on parameters or functions that are
the $S$ solutions to $\mu_{m,j}\left(S\right)=K_{d,m,j}$ and $\mu_{e,j}\left(S,M\right)=K_{d,e,j}$.
Denote these solutions by
\begin{align}
\lambda_{m,j} & \coloneqq\frac{K_{S,m,j}K_{d,m,j}}{\mu_{\text{max},m,j}-K_{d,m,j}},\label{eq:lambda-m1-def}\\
\lambda_{e,j}(M) & \coloneqq\frac{K_{S,e,j}K_{d,e,j}}{\mu_{\text{max},e,j}\left(\frac{M}{K_{M,j}+M}\right)-K_{d,e,j}}.\label{eq:lambda-e-def}
\end{align}
Assuming that the microbe concentrations are not zero,
then $\lambda_{m,j}$ and $\lambda_{e,j}(M)$ are the substrate concentrations at which
each microbe has zero net growth. The difference between the two types
of microorganisms is that each electroactive bacteria has zero net
growth on a curve $S=\lambda_{e,j}(M)$. Examples of these curves
are shown in Figure \ref{fig:lyapunov-region-1}. The fact that electroactive
bacteria have dual substrate-mediator limitation complicates the type
of analysis that has appeared in the chemostat literature.
\begin{figure}
\begin{centering}
\includegraphics{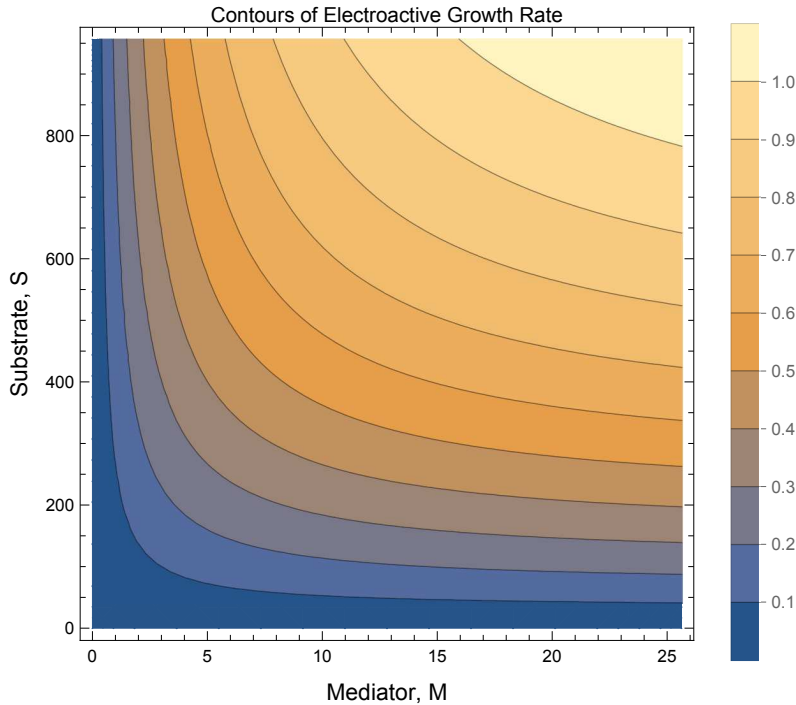}
\par\end{centering}
\caption{Contour plot of the growth rate of electroactive bacteria as a function
of mediator, $M$, and substrate, $S$, concentrations. In this contour
plot, isoclines are possible curves where an electroactive bacteria
species has zero net growth. The location of the curve $S=\lambda_{e,j}(M)$
corresponding to zero net growth is the solution to $\mu_{e,j}(S,M)=K_{d,e,j}$.
Different isoclines correspond to different values of $K_{d,e,j}$.
In contrast, the lines where methanogens have zero net growth are
horizontal in the $MS$-plane because methanogen growth does not depend
on mediator concentration. \label{fig:lambda-contours}}
\end{figure}
\begin{figure}
\centering{}%
\begin{minipage}[t]{0.45\textwidth}%
\begin{center}
\includegraphics[scale=0.75]{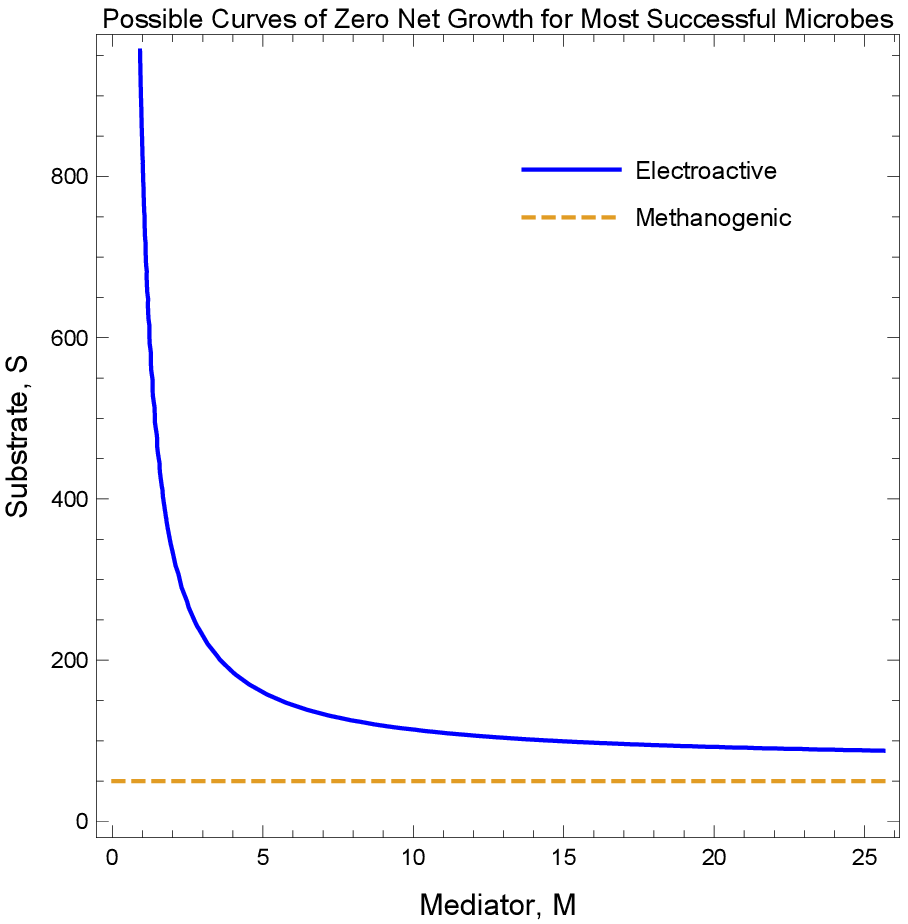}
\par\end{center}
\begin{center}
\caption{Curves along which the most successful electroactive and methanogenic
microbes have zero net growth. This figure depicts a scenario where
one of the methanogens, represented by $X_{m,1}$
survives at the lowest substrate concentration, $S=\lambda_{m,1}$,
for all obtainable mediator concentrations. In this scenario, competitive
exclusion by methanogen $X_{m,1}$ will occur. That is, the only
microbe that survives at the globally stable equilibrium will be methanogen
$X_{m,1}$. Additionally, all solutions converge to a point near
the right end of the dotted orange line in the $MS$-plane. \label{fig:lyapunov-region-1}}
\par\end{center}%
\end{minipage}\hfill{}%
\begin{minipage}[t]{0.45\textwidth}%
\begin{center}
\includegraphics[scale=0.75]{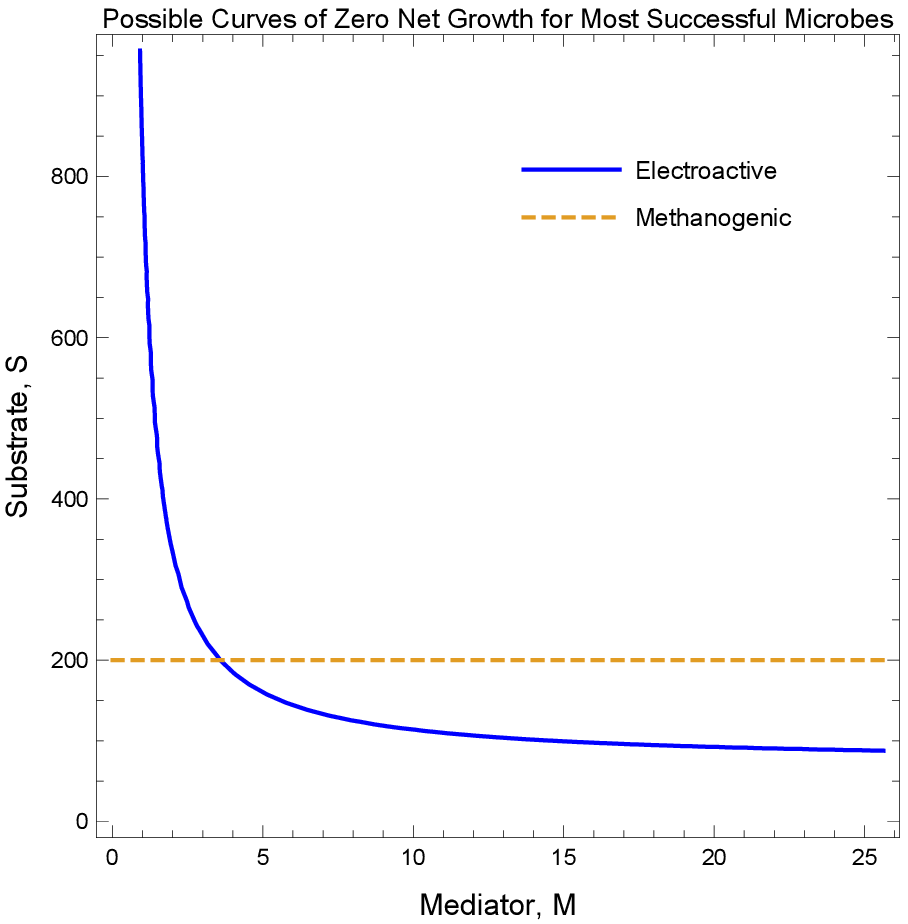}
\par\end{center}
\begin{center}
\caption{Curves along which the most successful electroactive and methanogenic
microbes have zero net growth. This figure depicts a scenario where
one of the electroactive bacteria, $X_{e,1}$, can survive at the
lowest substrate concentrations, $S=\lambda_{e,1}\left(M\right)$,
for some mediator concentrations. In this scenario, the outcome of
competition is unclear. Equilibrium points that might possibly correspond
to competitive exclusion by electroactive bacteria will be on the
blue curve below the dashed orange line. However, their location is
determined by solutions to the constraint. \label{fig:lyapunov-region-2}}
\par\end{center}%
\end{minipage}
\end{figure}

The equilibrium points corresponding to extinction and competitive
exclusion in the full unscaled model are given in Table \ref{tab:Equilibrium-points}.
Microbe concentrations at these equilibria are solutions to $\dot{S}=0$
where only one microbe survives. If $X_{m,1}$ or $X_{e,1}$ are the most competitive microbes, i.e., $\lambda_{m,1}$ or $\lambda_{e,1}(m_0)$ are the smallest $\lambda$ values, then equilibrium concentrations are
\begin{align*}
\mathbf{X}_{m}^{*}\coloneqq & \left(\frac{D(S_{0}-\lambda_{m,1})y_{m,1}}{\mu_{m,1}(\lambda_{m,1})},0,\dots,0\right),\\
\mathbf{X}_{e}^{*}\coloneqq & \left(\frac{D\left(S_{0}-\lambda_{e,1}(M^{*})\right)y_{e,1}}{\mu_{e,1}\left(\lambda_{e,1}(M^{*}),M\right)},0,\dots,0\right).
\end{align*}
The mediator concentrations $M_0$ and $M^*$ are the solutions to the constraint (\ref{eq:current})
at the corresponding points:
\begin{align*}
M_{0}\coloneqq & M_{\text{total}}\left[1-\text{exp}\left(-\frac{mF}{RT}\Delta E\right)\right],\\
M^{*}\coloneqq & M_{\text{total}}\left[1-\text{exp}\left(-\frac{mF}{RT}\left[\Delta E-2\frac{RT}{mF}\text{arcsinh}\left(\frac{I^{*}}{2I_{0}}\right)-I^{*}R_{\text{int}}(\mathbf{X}_{e}^{*})\right]\right)\right].
\end{align*}
Finally, in the case of competitive exclusion by electroactive bacteria,
the electric current is the solution to $\dot{M}=0$ when $\mathbf{X}_{e}(t)=\mathbf{X}_{e}^{*}$
and $M(t)=M^{*}$:
\[
I^{*}\coloneqq\frac{Y_{M}}{\gamma}D\left(S_{0}-\lambda_{e,1}(M^{*})\right).
\]
 
\begin{table}
\begin{centering}
\begin{tabular}{|c|c|}
\hline 
Equilibrium Point & Biological Meaning\tabularnewline
\hline 
\hline 
$P_{0}\coloneqq\left(S_{0},\mathbf{0},\mathbf{0},\mathbf{0},M_{0},0\right)$ & Extinction of all microbe species\tabularnewline
\hline 
$P_{m}\coloneqq\left(\lambda_{m,1},\mathbf{X}_{m}^{*},\mathbf{0},\mathbf{0},M_{0},0\right)$ & Competitive exclusion by methanogen $X_{m,1}$ \tabularnewline
\hline 
$P_{e}\coloneqq\left(\lambda_{e,1}(M^{*}),\mathbf{0},\mathbf{0},\mathbf{X}_{e}^{*},M^{*},I^{*}\right)$ & Competitive exclusion by electroactive bacteria $X_{e,1}$\tabularnewline
\hline 
\end{tabular}
\par\end{centering}
\caption{Equilibrium points of the MEC system (\ref{eq:substrate}) - (\ref{eq:current})
with Monod kinetics given by (\ref{eq:methanogen-growth-1}) -
(\ref{eq:electroactive-growth}). These equilibrium points represent
extinction of all microbes or competitive exclusion by one microbe
species of either type.\label{tab:Equilibrium-points}}
\end{table}

This analysis focuses on several sets of interest. Let
\begin{align}
\Omega\coloneqq \{(S,\mathbf{X}_{m},\mathbf{X}_{e},M,I): &\ 0 < S\leq S_{0},\ 0 < X_{m,j},\ 0 < X_{e,j},\label{eq:Omega}\\
 & 0< M\leq M_{0},\text{ and }0< I\leq\Delta E/R_{\text{min}}\}.\nonumber 
\end{align}
In practice, $\Omega$ is bounded because the dynamical system is dissipative, as shown in appendix \ref{subsec:Proof-of-Lemma}. The maximum concentration of each microbe must be bounded because it is not biologically possible to have infinite concentration. Although the upper bound for each microbe concentration is not clear, concentrations of each species will be bounded as $t \to \infty$. Let $\Omega_1 \subset \Omega$ be the bounded set containing these dynamics. Let $G$ be the closed set where the $C^\infty$ constraint (\ref{eq:current}) is satisfied. To obtain consistent initial conditions for the DAE, we will assume or the remainder of this section that initial conditions lie in the compact set $\Omega_G \coloneqq \Omega_1 \cap G$. 
\begin{condition}\label{C2}   
Suppose that $(S(0),\mathbf{X}_{m}(0),\mathbf{X}_{e}(0),M(0),I(0))\in\Omega_G$.
\end{condition}  Our first theorem relies on the following lemma regarding positivity and boundedness of the DAE solutions.

\begin{lemma}
\label{lem:positive-bounded} $\Omega_G$ is positively invariant for
(\ref{eq:substrate}) - (\ref{eq:current}) with Monod kinetics given
by (\ref{eq:methanogen-growth-1}) - (\ref{eq:electroactive-growth}).
\end{lemma}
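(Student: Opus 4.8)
The plan is to pass to the reduced ODE (\ref{eq:reduced-ODE}) on the constraint manifold and then check positive invariance of $\Omega_G$ one boundary face at a time. First I would note that for the Nernst--Butler--Volmer constraint (\ref{eq:current}) one has $\partial g/\partial I<0$ everywhere, since $g$ is affine--decreasing in the ohmic term $R_{\text{int}}(\mathbf{X}_{e})I$ and $\text{arcsinh}$ is increasing; combined with $g\to\mp\infty$ as $I\to\pm\infty$ and continuity, this shows (\ref{eq:current}) is globally solvable for a unique $I=h(\mathbf{X}_{e},M)$, with $h$ smooth on $\{0\le M<M_{\text{total}}\}$. Consequently every DAE solution automatically satisfies the constraint, i.e.\ lies on $G$, so it suffices to show that a trajectory of $\dot{x}=f(x,\psi(x))$ started in $\Omega_G$ never leaves $\Omega_1$.

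The microbe faces are immediate: equations (\ref{eq:methanogen1}) and (\ref{eq:electroactive}) are linear in $X_{m,j}$, resp.\ $X_{e,j}$, with coefficients bounded above by $\mu_{\text{max},m,j}$, resp.\ $\mu_{\text{max},e,j}$, so integrating gives $X_{m,j}(t)=X_{m,j}(0)\exp\!\big(\int_{0}^{t}(\mu_{m,j}(S(s))-K_{d,m,j})\,ds\big)>0$ and likewise for $X_{e,j}$; hence these components stay strictly positive on the whole interval of existence. For the substrate, the Monod forms (\ref{eq:methanogen-growth-1})--(\ref{eq:electroactive-growth}) give $\mu_{m,j}(0)=\mu_{e,j}(0,M)=0$, so $\dot{S}\big|_{S=0}=DS_{0}>0$, while at $S=S_{0}$ every consumption term in (\ref{eq:substrate}) is nonpositive, so $\dot{S}\big|_{S=S_{0}}\le 0$; hence $0<S(t)\le S_{0}$ is preserved.

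The coupled mediator--current block is handled using the explicit form of the constraint. Setting $L(M):=\Delta E-\frac{RT}{mF}\ln\frac{M_{\text{total}}}{M_{\text{total}}-M}$ and $\phi(I):=R_{\text{int}}(\mathbf{X}_{e})I+\frac{2RT}{mF}\text{arcsinh}\frac{I}{2A_{\text{sur,}A}I_{0}}$, the constraint reads $L(M)=\phi(I)$, where $L$ is strictly decreasing with $L(M_{0})=0$ and $\phi$ is strictly increasing and odd with $\phi(0)=0$; thus on $G$ one has $I=h(\mathbf{X}_{e},M)\ge 0\iff M\le M_{0}$, with equality exactly at $M=M_{0}$, and moreover $R_{\text{int}}(\mathbf{X}_{e})I\le L(M)\le\Delta E$ when $0\le M\le M_{0}$, giving $0\le I\le\Delta E/R_{\text{int}}(\mathbf{X}_{e})\le\Delta E/R_{\text{min}}$ on that range. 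It remains to confine $M$: at $M=0$ the consumption term in (\ref{eq:mediator}) vanishes (again by the Monod form) while $h(\mathbf{X}_{e},0)>0$ solves $\Delta E=\phi(I)$, so $\dot{M}\big|_{M=0}=\gamma h(\mathbf{X}_{e},0)>0$; at $M=M_{0}$ the definition of $M_{0}$ forces $h(\mathbf{X}_{e},M_{0})=0$, so $\dot{M}\big|_{M=M_{0}}=-Y_{M}\sum_{j}\mu_{e,j}(S,M_{0})X_{e,j}/y_{e,j}<0$ because $S>0$, $M_{0}>0$, and $X_{e,j}>0$ force every $\mu_{e,j}(S,M_{0})>0$. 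Hence $\{0<M\le M_{0}\}$ is forward invariant, and since $\Omega_G\subset\{M<M_{0}\}$ (as $I(0)>0$ and $I>0\iff M<M_{0}$ on $G$) we in fact keep $M(t)<M_{0}$ and therefore $I(t)>0$ for all $t\ge 0$.

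Finally I would invoke the dissipativity estimate of Appendix~\ref{subsec:Proof-of-Lemma} to supply the missing upper bounds on the microbe concentrations --- that is exactly the assertion that the bounded set $\Omega_1$ is positively invariant; combining it with the face-by-face arguments and the automatic invariance of $G$ yields positive invariance of $\Omega_G=\Omega_1\cap G$, and the same a priori bounds preclude finite-time blow-up, so forward solutions exist for all $t\ge 0$. The main obstacle is the mediator--current face: one must use the precise algebraic structure of (\ref{eq:current}) to see that at the upper mediator boundary the slaved current collapses to zero, making $\dot{M}<0$ there, and one must keep careful track of which inequalities defining $\Omega$ are strict --- in particular that the consistent initial set $\Omega_G$ lies in $\{M<M_{0},\ I>0\}$ and never touches $\{M=M_{0},\ I=0\}$. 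A secondary point worth flagging is that, unlike the general constraint of Section~\ref{sec:Simplified-model}, the constraint (\ref{eq:current}) is solvable for $I$ globally, which is what legitimizes the reduced-ODE reduction on all of $\Omega$.
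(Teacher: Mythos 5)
Your boundary-face analysis is sound and closely parallels the paper's own argument: the same observations at $S=0$ and $S=S_{0}$, the same integral representation for positivity of the $X_{\star,j}$, and the same use of the algebraic structure of (\ref{eq:current}) to tie the sign of $I$ to the position of $M$ relative to $M_{0}$ and to extract the bound $I\le\Delta E/R_{\text{min}}$. The one genuine difference in route is that you solve the constraint for $I=h(\mathbf{X}_{e},M)$ (using $\partial g/\partial I<0$ and the surjectivity of $\phi$), whereas the paper exponentiates and solves for $M$ as a function of $I$ and $\mathbf{X}_{e}$; both yield the equivalences $I>0\iff M<M_{0}$, $I=0\iff M=M_{0}$, and your version has the mild advantage of making the reduced-ODE reduction and the slaving of $I$ explicit.

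There is, however, one genuine gap: the upper bounds on the microbe concentrations. You ``invoke the dissipativity estimate of Appendix~\ref{subsec:Proof-of-Lemma},'' but that appendix \emph{is} the proof of this lemma, and the set $\Omega_{1}$ is itself defined only through that estimate, so the deferral is circular --- positive invariance of $\Omega_{1}$ is precisely what must be established. The missing step is the standard chemostat total-mass argument: set $\Sigma\coloneqq S+\sum_{j}X_{m,j}/y_{m,j}+\sum_{j}X_{e,j}/y_{e,j}$ and let $\bar{D}$ be the minimum of $D$ and the decay rates $K_{d,\star,j}$; then the consumption terms cancel in $\dot{\Sigma}$ and one gets $\dot{\Sigma}\le DS_{0}-\bar{D}\Sigma$, so $\Sigma$ (hence each $X_{\star,j}$, by positivity) is bounded forward in time and finite-time blow-up is excluded. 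With that inequality supplied, your proof is complete and matches the paper's.
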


We defer the proof of Lemma \ref{lem:positive-bounded} to appendix \ref{subsec:Proof-of-Lemma}. Lemma \ref{lem:positive-bounded} will be used in the proofs of the
theorem later in this section. The next lemma identifies conditions
under which a microorganism cannot survive at the stable equilibrium.
\begin{lemma}
\label{lem:Extinction} If a microbe species cannot obtain zero net
growth for $\left(S,M\right)$ values in $(0,S_{0}]\times(0,M_{0}]$,
then the concentration of that species will go to zero as $t\to\infty$.
\end{lemma}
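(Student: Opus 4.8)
The plan is to show that if a species (say electroactive species $j$, the methanogen case being entirely analogous and simpler) cannot achieve zero net growth anywhere in $(0,S_0]\times(0,M_0]$, then its growth rate stays strictly below its decay rate on the invariant set $\Omega_G$, which forces exponential decay of that species' concentration. First I would invoke Lemma \ref{lem:positive-bounded} to restrict attention to $\Omega_G$, where all solutions remain, so that $0<S(t)\le S_0$ and $0<M(t)\le M_0$ for all $t\ge 0$. The hypothesis ``cannot obtain zero net growth for $(S,M)$ in $(0,S_0]\times(0,M_0]$'' means, via \eqref{eq:lambda-e-def}, that $\mu_{e,j}(S,M)\neq K_{d,e,j}$ throughout this rectangle; since $\mu_{e,j}$ is continuous and the rectangle is connected, either $\mu_{e,j}(S,M)>K_{d,e,j}$ everywhere or $\mu_{e,j}(S,M)<K_{d,e,j}$ everywhere on it. The first alternative is ruled out by evaluating at the corner $(S_0,M_0)$ together with monotonicity: if net growth were positive even at the maximal substrate and mediator values then $S$ could never be driven down, contradicting that $\dot S = D(S_0-S)-\sum\cdots$ keeps $S\le S_0$ while the surviving-species balance at $\dot S=0$ requires some $\mu$ to meet its decay rate; more directly, one uses that $\lambda_{e,j}(M)$ defined by \eqref{eq:lambda-e-def} would have to lie outside $(0,S_0]$ for all attainable $M$, and the ``no zero net growth'' hypothesis is exactly the statement that the isocline $S=\lambda_{e,j}(M)$ misses the rectangle, which (given monotone increasing kinetics and that $\mu_{e,j}(0,M)=\mu_{e,j}(S,0)=0<K_{d,e,j}$) pins down the sign as $\mu_{e,j}(S,M)<K_{d,e,j}$.

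Given this, set $-\epsilon_j \coloneqq \sup_{(S,M)\in(0,S_0]\times(0,M_0]}\big(\mu_{e,j}(S,M)-K_{d,e,j}\big)$. By continuity of $\mu_{e,j}$ on the compact closure and the strict sign just established, $\epsilon_j>0$ — here one must be slightly careful since the rectangle is not closed at $S=0$, $M=0$, but $\mu_{e,j}$ extends continuously to the closed rectangle with the boundary values $0<K_{d,e,j}$, so the supremum over the closed rectangle is attained and is still $<0$. Then along any solution in $\Omega_G$,
\begin{equation}
\dot X_{e,j} = \big(\mu_{e,j}(S(t),M(t))-K_{d,e,j}\big)X_{e,j} \le -\epsilon_j X_{e,j},
\end{equation}
and Grönwall's inequality gives $0< X_{e,j}(t)\le X_{e,j}(0)e^{-\epsilon_j t}\to 0$ as $t\to\infty$. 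The methanogen case uses \eqref{eq:methanogen1} with $\mu_{m,j}(S)$ depending on $S$ alone and the interval $(0,S_0]$ in place of the rectangle, with $\lambda_{m,j}$ from \eqref{eq:lambda-m1-def}; the argument is identical and actually one-dimensional.

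The main obstacle is the sign-determination step: translating the geometric hypothesis (the zero-net-growth isocline $S=\lambda_{e,j}(M)$ does not meet $(0,S_0]\times(0,M_0]$) into the clean analytic statement $\mu_{e,j}<K_{d,e,j}$ on the whole rectangle, and ruling out the ``uniformly too competitive'' alternative $\mu_{e,j}>K_{d,e,j}$. I would handle this by a connectedness-plus-boundary-value argument: $\mu_{e,j}$ is continuous and nonzero-minus-$K_{d,e,j}$ on the connected rectangle, hence of constant sign there, and the sign is negative because $\mu_{e,j}\to 0$ as $S\to 0^+$ by Condition \ref{C1}(1) (or the analogous Monod property $\mu_{e,j}(0,M)=0$), while $K_{d,e,j}>0$. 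Once the uniform strict inequality is in hand, the decay estimate and Grönwall are routine, and positivity of $X_{e,j}(t)$ for all finite $t$ follows from \eqref{eq:electroactive} since $X_{e,j}=0$ is invariant, consistent with the initial-condition assumptions.
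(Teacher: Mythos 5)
Your proposal is correct, and it reaches the same destination as the paper's proof --- a uniform strictly negative upper bound on the per-capita net growth rate, followed by exponential decay --- but it gets the key bound by a genuinely different route. The paper argues case by case on \emph{why} the zero-net-growth locus misses $(0,S_0]\times(0,M_0]$: either $\lambda_{\star,j}\leq 0$ (maximum growth rate cannot exceed decay) or $\lambda_{\star,j}>S_0$ (too much substrate required), and in each case it manipulates the explicit Monod expressions inside the integral representation of $X_{\star,j}$ to exhibit a concrete negative exponent such as $-K_{d,m,j}K_{S,m,j}/(K_{S,m,j}+S_0)$. You instead use a soft argument: $\mu_{e,j}-K_{d,e,j}$ is continuous and nonvanishing on the connected rectangle, hence of constant sign, and the sign is negative because $\mu_{e,j}\to 0<K_{d,e,j}$ as $S\to 0^+$; compactness of the closed rectangle (where the boundary values equal $-K_{d,e,j}$) then gives a uniform $\epsilon_j>0$ and Gr\"onwall finishes. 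The trade-offs: the paper's computation produces explicit decay rates in terms of the Monod parameters, while your $\epsilon_j$ is non-constructive; on the other hand your argument needs only continuity and the vanishing of $\mu$ at zero substrate, so it extends beyond Monod kinetics, and it automatically covers the mixed configuration in which $\lambda_{e,j}(M)$ is nonpositive for small $M$ but exceeds $S_0$ for larger $M\in(0,M_0]$ --- a case that falls between the two alternatives the paper's dichotomy treats explicitly. Your handling of the one delicate point (taking the supremum over a non-closed rectangle by extending $\mu_{e,j}$ continuously to its closure) is correct; just make sure the final write-up leads with the clean connectedness-plus-boundary-value argument rather than the discarded attempt to evaluate at the corner $(S_0,M_0)$.
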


We leave the proof in appendix \ref{subsec:Proof-of-Lemma-1}. The intuition
behind Lemma \ref{lem:Extinction} is that the substrate concentrations
where each microbe has zero net growth are $S$-coordinates of equilibria
points and they must be attainable in the interval $(0,S_{0}]$. For
electroactive bacteria, the equilibrium substrate concentrations must
be in the interval $\left(0,S_{0}\right]$ for obtainable mediator
concentrations, $M\in\left(0,M_{0}\right]$. If $\lambda_{m,j}$
and $\lambda_{e,j}(M)$ are not in this interval,
definitions (\ref{eq:lambda-m1-def}) and (\ref{eq:lambda-e-def})
tell us that either (a) the maximum growth rate is less than or equal
to the decay rate, or (b) the microbe requires more substrate than
is flowing into the device. In other words, for the microorganisms
to survive, they must be able to attain positive net growth and must
not require more substrate than is available. We will assume without
loss of generality that the following condition holds for each microbe
species; otherwise, the corresponding concentration will approach zero concentration as $t\rightarrow\infty$.
\begin{condition}\label{C3}   
Suppose that $\lambda_{m,j}\in\left(0,S_0\right]$, and ${\lambda_{e,j}(M)}\in(0,S_0] $ for some $M \in (0,M_0]$.
\end{condition}We now present the main result of this section, a theorem describing
the competitive exclusion principle in the MEC. Limiting behavior
of the DAE system (\ref{eq:substrate}) - (\ref{eq:current}) with
Monod kinetics (\ref{eq:methanogen-growth-1}) - (\ref{eq:electroactive-growth})
is determined by the smallest element in $\Lambda\coloneqq\left\{ \lambda_{m,j}\right\} {}_{j=1}^{n_{m}}\cup\left\{ \lambda_{e,j}(M_{0})\right\} _{j=1}^{n_{e}}$,
the set of smallest substrate concentrations where each microbe has
zero net growth. Intuitively, when $S$ approaches $\min\left(\Lambda\right)$
from above, all but the most competitive microbes will have negative net growth.
\begin{theorem}[Competitive Exclusion by Methanogens] \label{thm:competitive-exclusion-by-methanogens}Suppose
that Conditions \ref{C2} and \ref{C3} hold. Suppose also
that $\lambda_{m,1}$
is strictly smaller than all other elements of $\Lambda$ (i.e., methanogen
$X_{m,1}$ can survive at the lowest substrate concentration).
Then all solutions of (\ref{eq:substrate}) - (\ref{eq:current})
with Monod kinetics (\ref{eq:methanogen-growth-1}) - (\ref{eq:electroactive-growth})
will approach the point $P_{m}$ as $t\to\infty$.
\end{theorem}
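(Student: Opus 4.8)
The plan is to apply LaSalle's invariance principle (Theorem~\ref{thm:LaSalle}) to the reduced flow of the DAE on the compact, positively invariant set $\Omega_{G}$ furnished by Lemma~\ref{lem:positive-bounded}. Because the constraint (\ref{eq:current}) is regular of index~$1$ ($\partial g/\partial I\neq0$), the implicit function theorem gives $I=\psi(M,\mathbf X_{e})$ near $G$, so (\ref{eq:substrate})--(\ref{eq:current}) restrict to a smooth vector field on the manifold $G$ and LaSalle applies on $\Omega_{G}$. The substance of the proof is to exhibit a suitable Lyapunov function, which I would take to be an adaptation to the DAE of the one Hsu introduced for the chemostat.

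The candidate is
\[
V \;=\; c\int_{\lambda_{m,1}}^{S}\frac{\mu_{m,1}(\sigma)-K_{d,m,1}}{\mu_{m,1}(\sigma)}\,d\sigma
 \;+\;\frac{c}{y_{m,1}}\!\left(X_{m,1}-X_{m,1}^{*}-X_{m,1}^{*}\ln\frac{X_{m,1}}{X_{m,1}^{*}}\right)
 \;+\;\sum_{j=2}^{n_{m}}\frac{a_{j}}{y_{m,j}}X_{m,j}
 \;+\;\sum_{j=1}^{n_{e}}\frac{b_{j}}{y_{e,j}}X_{e,j}
 \;+\;\kappa\,(M_{0}-M),
\]
with positive constants $c,a_{j},b_{j},\kappa$ chosen later. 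Since $\mu_{m,1}$ is increasing with $\mu_{m,1}(\lambda_{m,1})=K_{d,m,1}$, the integral is $\geq0$ and vanishes only at $S=\lambda_{m,1}$; the parenthesized term is $\geq0$ and vanishes only at $X_{m,1}=X_{m,1}^{*}$; and on $\Omega_{G}$ (where the microbe concentrations are nonnegative and $M\leq M_{0}$) the last three groups are $\geq0$ and vanish exactly when all ``losing'' species are absent and $M=M_{0}$. Hence $V\geq0$ on $\Omega_{G}$ with $V=0$ precisely at $P_{m}$.

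I would then differentiate $V$ along solutions, using the identities $\mu_{m,1}(\lambda_{m,1})=K_{d,m,1}$ and $D(S_{0}-\lambda_{m,1})=\mu_{m,1}(\lambda_{m,1})X_{m,1}^{*}/y_{m,1}$. As in Hsu's computation, the factor $(\mu_{m,1}(S)-K_{d,m,1})/\mu_{m,1}(S)$ is designed precisely so that the $S$-integral and the $X_{m,1}$-term combine into a quantity that is $\leq0$ and vanishes only at $S=\lambda_{m,1}$, leaving linear cross-terms $\sum_{j}(\mathrm{coeff}_{m,j})X_{m,j}$ and $\sum_{j}(\mathrm{coeff}_{e,j})X_{e,j}$; the mediator term contributes $-\kappa\dot M=-\kappa\gamma I+\kappa Y_{M}\sum_{j}\mu_{e,j}X_{e,j}/y_{e,j}$, whose first summand is $\leq0$ and whose second is absorbed into the electroactive cross-terms. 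The methanogen coefficients are negative on the relevant $S$-range because $\lambda_{m,1}<\lambda_{m,j}$ forces $\mu_{m,j}(S)$ to be suitably small; the electroactive coefficients are controlled by first bounding $\mu_{e,j}(S,M)\leq\mu_{e,j}(S,M_{0})$ (monotonicity in $M$ together with $M\leq M_{0}$ on $\Omega$), which reduces the estimate to the one-variable response $\mu_{e,j}(\cdot,M_{0})$, whose zero-net-growth substrate value $\lambda_{e,j}(M_{0})$ is strictly larger than $\lambda_{m,1}$. Choosing the weights as in Hsu's argument (possible exactly because $\lambda_{m,1}$ is the \emph{strictly} smallest element of $\Lambda$) then gives $\dot V\leq0$ on $\Omega_{G}$.

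Finally I would identify the largest invariant set $W\subset\Omega_{G}$ on which $\dot V=0$. There each nonpositive summand must vanish, so $S\equiv\lambda_{m,1}$, every losing species vanishes (the cross-coefficients are strictly negative at $S=\lambda_{m,1}$ for all $M\in(0,M_{0}]$), and $I\equiv0$; with $\mathbf X_{e}\equiv\mathbf 0$ and $I\equiv0$, the constraint (\ref{eq:current}) forces $M\equiv M_{0}$; and invariance with $S\equiv\lambda_{m,1}$ forces $\dot S=0$, hence $X_{m,1}\equiv X_{m,1}^{*}$. Thus $W=\{P_{m}\}$, and Theorem~\ref{thm:LaSalle} gives that every solution with initial data in $\Omega_{G}$ (Condition~\ref{C2}) converges to $P_{m}$ as $t\to\infty$. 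The step I expect to be the main obstacle is establishing $\dot V\leq0$ on all of $\Omega_{G}$: beyond the delicate Hsu-type balancing of the weights (now with unequal decay rates $K_{d,\cdot,j}$ in place of a single dilution rate), one must control the two features absent from the classical chemostat --- the joint substrate/mediator limitation of $\mu_{e,j}$, handled via $M\le M_{0}$ and monotonicity, and the sign of the mediator/current coupling $\dot M=\gamma I-Y_{M}\sum_{j}\mu_{e,j}X_{e,j}/y_{e,j}$, handled by pairing $-\kappa\gamma I\le0$ against the electroactive cross-terms.
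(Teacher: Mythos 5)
Your proposal is correct and follows essentially the same route as the paper: LaSalle's invariance principle on the compact positively invariant set $\Omega_G$ from Lemma~\ref{lem:positive-bounded}, with Hsu's chemostat Lyapunov function extended by linear terms for every losing species and the bound $M\leq M_0$ used to reduce the electroactive cross-terms to a single-variable response with break-even value $\lambda_{e,j}(M_0)>\lambda_{m,1}$. The only deviation is your extra term $\kappa\,(M_0-M)$, which is workable for sufficiently small $\kappa$ but unnecessary; the paper omits it and instead deduces $M\to M_0$ and $I\to 0$ from $\dot M=\gamma I\geq 0$ on the invariant set $E$ where $\mathbf{X}_e=\mathbf{0}$.
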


\begin{proof}
(\cite{hsu_limiting_1978} provided a Lyapunov function for equations
(\ref{eq:substrate}) - (\ref{eq:methanogen1}) with $\mathbf{X}_{e}=\mathbf{0}$. That function is extended to provide these results.)
$\Omega_G\subset\mathbb{R}^{n_{m}+n_{e}+3}$ is compact
and, by Lemma \ref{lem:positive-bounded}, it is positively invariant for (\ref{eq:substrate}) - (\ref{eq:current}). Suppose that $\lambda_{m,1}$
is strictly smaller than all other elements of $\Lambda$. Let
\begin{equation}
V(S,\mathbf{X}_{m},\mathbf{X}_{m_{2}},\mathbf{X}_{e},M,I)\coloneqq \int_{\lambda_{m,1}}^{S}\frac{\sigma-\lambda_{m,1}}{\sigma}d\sigma+c_{m,1}\int_{X_{m,1}^{*}}^{X_{m,1}}\frac{\xi-X_{m,1}^{*}}{\xi}d\xi +\sum_{j=2}^{n_{m}}c_{m,j}X_{m,j}+\sum_{j=1}^{n_{e}}c_{e,j}X_{e,j}\label{eq:Lyapunov}
\end{equation}
for some unspecified constants $c_{m,j}$, $c_{m_{2},j}$, and
$c_{e,j}$. Then 
\begin{equation}
\nabla V=\left(\frac{S-\lambda_{m,1}}{S},c_{m,1}\left(\frac{X_{m,1}-X_{m,1}^{*}}{X_{m,1}}\right),c_{m,2},\dots,c_{m,n_{m}},c_{e,1},\dots,c_{e,n_{e}},0,0\right).
\end{equation}
To show that $\dot{V}\leq0$ in $\Omega_{G}$, we will use the following
rearrangements of equations (\ref{eq:methanogen1}) - (\ref{eq:electroactive}):
\begin{align}
\dot{X}_{m,j}= & \ (\mu_{\text{max},m,j}-K_{d,m,j})\left(\frac{S-\lambda_{m,j}}{K_{S,m,j}+S}\right)X_{m,j},\label{eq:methanogen1-representation}\\
\dot{X}_{e,j}= & \left(\mu_{\text{max},e,j}-K_{d,e,j}\frac{K_{M,j}+M}{M}\right)\left(\frac{S-\lambda_{e,j}(M)}{K_{S,e,j}+S}\right)\left(\frac{M}{K_{M,j}+M}\right)X_{e,j}.\label{eq:electroactive-representation}
\end{align}
Choose positive constants
\begin{align*}
c_{m,j} & =\frac{\mu_{\text{max},m,j}}{y_{m,j}(\mu_{\text{max},m,j}-K_{d,m,j})},\\
c_{e,j} & =\frac{\mu_{\text{max},e,j}}{y_{e,j}\left(\mu_{\text{max},e,j}-K_{d,e,j}\frac{K_{M,j}+M_{0}}{M_{0}}\right)}.
\end{align*}
Then
\begin{align*}
\dot{V} & =\left(\frac{S-\lambda_{m,1}}{S}\right)\left(D\left(S_{0}-S\right)-\sum_{j=1}^{n_{m}}\frac{\mu_{m,j}(S)}{y_{m,j}}X_{m,j} -\sum_{j=1}^{n_{e}}\frac{\mu_{e,j}(S,M)}{y_{e,j}}X_{e,j}\right)\\
 & \hspace{4mm}+\left(X_{m,1}-X_{m,1}^{*}\right)\left(\frac{S-\lambda_{m,1}}{S}\right)\frac{\mu_{m,1}(S)}{y_{m,1}}+\sum_{j=2}^{n_{m}}\left(\frac{S-\lambda_{m,j}}{S}\right)\frac{\mu_{m,j}(S)}{y_{m,j}}X_{m,j}\\
 & \hspace{4mm}+\sum_{j=1}^{n_{e}}c_{e,j}\left(\mu_{\text{max},e,j}-K_{d,e,j}\frac{K_{M,j}+M}{M}\right)\left(\frac{S-\lambda_{e,j}(M)}{K_{S,e,j}+S}\right)\left(\frac{M}{K_{M,j}+M}\right)X_{e,j}.
\end{align*}
Since
\[
c_{e,j}\left(\mu_{\text{max},e,j}-K_{d,e,j}\frac{K_{M,j}+M}{M}\right)=\frac{\mu_{\text{max},e,j}}{y_{e,j}}\frac{\left(\mu_{\text{max},e,j}-K_{d,e,j}\frac{K_{M,j}+M}{M}\right)}{\left(\mu_{\text{max},e,j}-K_{d,e,j}\frac{K_{M,j}+M_{0}}{M_{0}}\right)}\leq \frac{\mu_{\text{max},e,j}}{y_{e,j}},
\]
we can combine corresponding sums to obtain
\begin{align*}
\dot{V}\leq & \left(\frac{S-\lambda_{m,1}}{S}\right)\left(D\left(S_{0}-S\right)-\frac{\mu_{m,1}(S)}{y_{m,1}}X_{m,1}^{*}\right)\\
 & +\sum_{j=2}^{n_{m}}\left(\frac{\lambda_{m,1}-\lambda_{m,j}}{S}\right)\frac{\mu_{m,j}(S)}{y_{m,j}}X_{m,j}+\sum_{j=1}^{n_{e}}\left(\frac{\lambda_{m,1}-\lambda_{e,j}(M)}{S}\right)\frac{\mu_{e,j}(S,M)}{y_{e,j}}X_{e,j}.
\end{align*}
By assumption, each of the sums is less than or equal to zero. Substituting
$X_{m,1}^{*}=\frac{D\left(S_{0}-\lambda_{m,1}\right)y_{m,1}}{\mu_{m,1}\left(\lambda_{m,1}\right)}$
yields 
\[
\dot{V}\leq-\frac{D(S-\lambda_{m,1})^{2}(\lambda_{m,1}S+K_{S,m,1}S_{0})}{(K_{S,m,1}+S)S\lambda_{m,1}}\leq0.
\]
Thus, Theorem \ref{thm:LaSalle} tells us that solutions to equations
(\ref{eq:substrate}) - (\ref{eq:current}) that start in $\Omega_{G}$
approach the largest invariant set in
\begin{equation}
\begin{split}E=\{(x,y)\in\Omega_{G}:\dot{V}(x,y)=0\}=\{(\lambda_{m,1},\mathbf{X}_{m,1}^{*},\mathbf{0},M,I)\}.\end{split}
\label{eq:E-m1}
\end{equation}
From equation (\ref{eq:mediator}), we know that $\dot{M}=\gamma I\geq0$ in $E$. Thus, solutions approach
$M=M_{0}$ and $I=0$ or
\[
W=\{(\lambda_{m,1},\mathbf{X}_{m}^{*},\mathbf{0},M_{0},0)\}.
\]
\end{proof}
Theorem \ref{thm:competitive-exclusion-by-methanogens} indicates
that the methanogen that can survive at the lowest substrate concentration
will outcompete as $t\to\infty$. It is possible, albeit unlikely,
that multiple methanogen species obtain zero net growth at the same
smallest substrate concentration. The following corollary predicts
that all of these methanogens will coexist while competitively excluding
the other microbes as $t\to\infty$.
\begin{corollary}[Coexistence of Methanogens]
\label{thm:(Coexistence-of-Methanogens)}
Suppose that Conditions \ref{C2} and \ref{C3} hold and that
$\lambda=\lambda_{m,1}=\dots\lambda_{m,k}$
is strictly smaller than other distinct elements of $\Lambda$. Then
as $t\to\infty$, all solutions of (\ref{eq:substrate}) - (\ref{eq:current})
with Monod kinetics (\ref{eq:methanogen-growth-1}) - (\ref{eq:electroactive-growth})
will approach the invariant set
\begin{align*}
W=\{(\lambda,\tilde{\mathbf{X}}_{m},\mathbf{0},M_{0},0): & \ \tilde{\mathbf{X}}_{m}=\left<X_{m,1},\dots,X_{m,k},0,\dots,0\right>, D\left(S_{0}-\lambda\right)=\sum_{j=1}^{k}\frac{\mu_{m,j}(S)}{y_{m,j}}X_{m,j}\}.
\end{align*}
\end{corollary}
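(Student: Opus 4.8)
The plan is to reuse, with minimal modification, the Lyapunov function and LaSalle argument from the proof of Theorem~\ref{thm:competitive-exclusion-by-methanogens}; the only genuinely new feature is that the minimal break-even value $\lambda=\lambda_{m,1}=\dots=\lambda_{m,k}$ is now shared by $k$ species, so the target of the flow is the continuum $W$ rather than an isolated point. First I would fix any reference point $(X_{m,1}^{*},\dots,X_{m,k}^{*})\in W$, i.e.\ one with $D(S_{0}-\lambda)=\sum_{i=1}^{k}\frac{\mu_{m,i}(\lambda)}{y_{m,i}}X_{m,i}^{*}$, and set
\[
V\coloneqq\int_{\lambda}^{S}\frac{\sigma-\lambda}{\sigma}\,d\sigma+\sum_{i=1}^{k}c_{m,i}\int_{X_{m,i}^{*}}^{X_{m,i}}\frac{\xi-X_{m,i}^{*}}{\xi}\,d\xi+\sum_{j=k+1}^{n_{m}}c_{m,j}X_{m,j}+\sum_{j=1}^{n_{e}}c_{e,j}X_{e,j},
\]
with the same positive constants $c_{m,j}$ and $c_{e,j}$ chosen in Theorem~\ref{thm:competitive-exclusion-by-methanogens}. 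When some $X_{m,i}^{*}=0$ the corresponding integral is just $c_{m,i}X_{m,i}$, so this genuinely generalizes the earlier function; in all cases $V$ is $C^{1}$ and nonnegative on $\Omega_{G}$.

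Next I would differentiate $V$ along (\ref{eq:substrate})--(\ref{eq:current}) exactly as in the theorem, using the rearrangements (\ref{eq:methanogen1-representation})--(\ref{eq:electroactive-representation}) and the bound $c_{e,j}\bigl(\mu_{\text{max},e,j}-K_{d,e,j}\tfrac{K_{M,j}+M}{M}\bigr)\le\mu_{\text{max},e,j}/y_{e,j}$. For each $i\le k$ the logarithmic term supplies $\frac{\mu_{m,i}(S)}{y_{m,i}}\frac{S-\lambda}{S}(X_{m,i}-X_{m,i}^{*})$, and these combine with the matching pieces of $\frac{S-\lambda}{S}\dot S$ to remove all $X_{m,i}$-dependence for $i\le k$, leaving $\frac{S-\lambda}{S}\bigl(D(S_{0}-S)-\sum_{i\le k}\frac{\mu_{m,i}(S)}{y_{m,i}}X_{m,i}^{*}\bigr)$. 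Invoking the reference-point identity and $\mu_{m,i}(S)-\mu_{m,i}(\lambda)=\mu_{\text{max},m,i}K_{S,m,i}(S-\lambda)/[(K_{S,m,i}+S)(K_{S,m,i}+\lambda)]$, this collapses to a negative multiple of $(S-\lambda)^{2}$. The remaining terms are $\sum_{j=k+1}^{n_{m}}\frac{\lambda-\lambda_{m,j}}{S}\frac{\mu_{m,j}(S)}{y_{m,j}}X_{m,j}$ and, after the same manipulation used in Theorem~\ref{thm:competitive-exclusion-by-methanogens}, an electroactive sum bounded above by $\sum_{j}\frac{\lambda-\lambda_{e,j}(M)}{S}\frac{\mu_{e,j}(S,M)}{y_{e,j}}X_{e,j}$; both are $\le0$, since $\lambda<\lambda_{m,j}$ for $j>k$ and $\lambda<\lambda_{e,j}(M_{0})\le\lambda_{e,j}(M)$ for $M\in(0,M_{0}]$ by monotonicity of $\lambda_{e,j}$ (species violating Condition~\ref{C3} having already been discarded via Lemma~\ref{lem:Extinction}). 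Hence $\dot V\le0$ on $\Omega_{G}$, for every admissible reference point.

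Finally I would read off the limit set. Equality $\dot V=0$ forces $S=\lambda$, $X_{m,j}=0$ for $j>k$, and $X_{e,j}=0$ for all $j$, so
\[
E=\{\dot V=0\}=\{(\lambda,\langle X_{m,1},\dots,X_{m,k},0,\dots,0\rangle,\mathbf 0,M,I)\in\Omega_{G}\}.
\]
On the largest invariant subset of $E$, $\mathbf X_{e}\equiv\mathbf 0$ makes (\ref{eq:mediator}) read $\dot M=\gamma I\ge0$, and the bound $M\le M_{0}$ forces $I\equiv0$, whence the constraint (\ref{eq:current}) at $(\mathbf 0,M,0)$ gives $M=M_{0}$; invariance of $S\equiv\lambda$ then forces $\dot S=0$, i.e.\ $D(S_{0}-\lambda)=\sum_{j=1}^{k}\frac{\mu_{m,j}(\lambda)}{y_{m,j}}X_{m,j}$ (with $\dot X_{m,i}=0$ automatic since $\mu_{m,i}(\lambda)=K_{d,m,i}$). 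This set is exactly $W$, so LaSalle's invariance principle (Theorem~\ref{thm:LaSalle}), with $\Omega_{G}$ compact and positively invariant by Lemma~\ref{lem:positive-bounded}, gives convergence of every solution to $W$. I expect the only real work to be the $\dot V$ bookkeeping — verifying that the cross-cancellation still closes with $k$ simultaneous logarithmic terms and that $\dot V\le0$ holds uniformly in the reference point, so the argument detects the whole manifold $W$ rather than a single one of its points; the electroactive-species estimate and the ``$\dot M=\gamma I$ on $E$'' step are inherited essentially verbatim from Theorem~\ref{thm:competitive-exclusion-by-methanogens}.
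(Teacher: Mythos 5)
Your proof is correct and takes essentially the same route as the paper: reuse the Lyapunov function and LaSalle argument from Theorem~\ref{thm:competitive-exclusion-by-methanogens}, observe that the tied species $X_{m,1},\dots,X_{m,k}$ no longer contribute sign-definite terms forcing them to vanish, and then extract $M=M_{0}$, $I=0$, and the balance $D(S_{0}-\lambda)=\sum_{j\le k}\mu_{m,j}(\lambda)X_{m,j}/y_{m,j}$ on the largest invariant subset of $\{\dot V=0\}$. The only (harmless) difference is that you anchor $k$ logarithmic terms at an arbitrary reference point of $W$, whereas the paper keeps the theorem's single logarithmic term in $X_{m,1}$ and notes that the linear terms for $X_{m,2},\dots,X_{m,k}$ drop out of $\dot V$; one fixed anchor already suffices, since LaSalle returns the largest invariant set in $\{\dot V=0\}$ independently of which point of $W$ was chosen.
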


\begin{proof}
If $\lambda=\lambda_{m,1}=\dots\lambda_{m,k}$
are strictly smaller than all distinct elements of $\Lambda$, then
terms corresponding to $X_{m,1},\dots,X_{m,k}$ vanish from $\dot{V}$ in equation (\ref{eq:Lyapunov}) in the proof
of Theorem \ref{thm:competitive-exclusion-by-methanogens}. We still
have $\dot{V}\leq0$, but now solutions approach the largest invariant
set in $E=\{(x,y)\in\Omega_{G}:\dot{V}(x,y)=0\}=\{(\lambda,\tilde{\mathbf{X}}_{m},\mathbf{0},M,I)\}$.
Since $M \to M_0$ and $I \to 0$, solutions in $E$ approach $W=\{(\lambda,\tilde{\mathbf{X}}_{m},\mathbf{0},M_{0},0)\}$.
\end{proof}
Theorem \ref{thm:competitive-exclusion-by-methanogens} and Corollary
\ref{thm:(Coexistence-of-Methanogens)} predict that the methanogen(s)
that can grow at the smallest substrate concentration will outcompete
the other species. It is possible that multiple species will share
the same smallest $\lambda$ value, in which case coexistence of methanogens
is possible. Theorem \ref{thm:competitive-exclusion-by-methanogens}
also tells us that electroactive bacteria are guaranteed to lose the
competition if $\forall j$, $\lambda_{e,j}(M_{0})>\min\left(\Lambda\right)$. Unfortunately,
the Lyapunov function used in the proof of Theorem \ref{thm:competitive-exclusion-by-methanogens}
does not suffice to prove analogous results about global asymptotic
stability for electroactive-only equilibria. As shown in Section \ref{sec:Simplified-model},
it is likely that additional conditions on the spectrum are required
to reach conclusions about global stability of competive exclusion
by electroactive bacteria. Unfortunately, the spectrum of (\ref{eq:substrate})
- (\ref{eq:current}) cannot readily be evaluated, even when there
is only one species of each type.

The chemostat literature has shown that competitive exclusion by the
microbe that survives at the lowest substrate concentration is globally
asymptotically stable in a variety of cases. Several authors have
considered limitation by two complementary or substitutable substrates
\cite{hsu_exploitative_1981,ballyk_exploitative_1993,li_global_2000,li_how_2001},
using either monotone or minimum Monod kinetics. To our knowledge,
no one has considered multiplicative Monod kinetics of the form given
by (\ref{eq:methanogen-growth-1}) - (\ref{eq:electroactive-growth}),
particularly when one of the limiting substrates is a mediator molecule
whose concentration depends on an algebraic constraint. Figures \ref{fig:lyapunov-region-1}
and \ref{fig:lyapunov-region-2} show that the behavior of the electroactive
bacteria is more complicated than a microbe growing on one or two
substrates. One also needs detailed information about mediator concentration
because electroactive bacteria grow much more slowly when the mediator
concentration is low. Given the results in Section \ref{sec:Simplified-model},
it is not surprising that global asymptotic stability of competitive
exclusion by electroactive bacteria is more complicated. In the next
section, we provide numerical simulations supporting Theorem \ref{thm:competitive-exclusion-by-methanogens}
and Corollary \ref{thm:(Coexistence-of-Methanogens)}. These simulations
also show that competitive exclusion by electroactive bacteria may
occur if electroactive bacteria can grow at the lowest substrate concentration.

\section{Numerical Simulations \label{sec:Numerical-Simulations}}

In this section, we consider solutions of (\ref{eq:substrate}) -
(\ref{eq:current}) with Monod kinetics (\ref{eq:methanogen-growth-1})
- (\ref{eq:electroactive-growth}) and with one species of each type.
We demonstrate that when Theorem \ref{thm:competitive-exclusion-by-methanogens}
and Corollary \ref{thm:(Coexistence-of-Methanogens)} are satisfied,
then solutions behave as expected. That is, if one or more species
of methanogens can survive at the lowest substrate concentration,
then the model exhibits competitive exclusion by those methanogens.
We also demonstrate that if electroactive bacteria survive at the
lowest substrate concentration, then the model may exhibit competitive
exclusion by electroactive bacteria.

For the simulations below, we generally use parameters from the first
table of \cite{dudley_sensitivity_2019}, with the following exceptions.
The influent substrate concentration is $S_{0}=100$ and the maximum
substrate consumption rates for each species are $14$.
The maximum growth rates vary between simulations to change which
microbe can grow at the lowest substrate concentration; the parameters
are given in Table \ref{tab:Parameters-in-simulations}. Numerical
solutions were generated using the variable-order, variable-coefficient
backward differentiation formula in fixed-leading coefficient form
\cite{brenan_numerical_1995} from the IDAS package of SUNDIALS suite
of nonlinear and DAE solvers \cite{hindmarsh2005sundials}. Initial
conditions were set as $S(0)=100$, $X_{m,1}(0)=1$, $X_{m_{2},1}(0)=1$,
$M(0)=25$, and $I(0)=6$ while $X_{e,1}(0)$ is the solution to the
algebraic constraint (\ref{eq:current}).
\begin{table}[H]
\begin{centering}
\begin{tabular}{|c|c|c|c|c|}
\hline 
Parameter & Exclusion by$X_{m,1}$ & Coexistence of $X_{m,1}$, $X_{m,2}$ & Exclusion by $X_{e,1}$\tabularnewline
\hline 
\hline 
$\mu_{\text{max},e,1}$ & 1  & 1 & 5\tabularnewline
\hline 
$\mu_{\text{max},m,1}$ & 0.2  & 0.1 & 0.1\tabularnewline
\hline 
$\mu_{\text{max},m,2}$ & 0.1 & 0.1 & 0.1\tabularnewline
\hline 
$\lambda_{m,1}$ & 8.18 & 16.5 & 16.5\tabularnewline
\hline 
$\lambda_{m,2}$ & 16.5 & 16.5 & 16.5\tabularnewline
\hline 
$\lambda_{e,1}$ & 40.3 & 40.3 & 7.75\tabularnewline
\hline 
\end{tabular}
\par\end{centering}
\caption{Parameters in the three numerical simulations shown in Figure \ref{fig:Numerical-solutions}.\label{tab:Parameters-in-simulations}}
\end{table}
Figure \ref{fig:CE-m1} demonstrates that if
a methanogen can grow at the lowest substrate concentration, then
all solutions converge to a methanogen-only equilibrium, $P_{m}$. 
Figure \ref{fig:Coexistence} shows that if multiple
methanogens can survive at the same substrate concentration, then
solutions converge to a set with only those microbes. These simulations
support Theorem \ref{thm:competitive-exclusion-by-methanogens} and
Corollary \ref{thm:(Coexistence-of-Methanogens)}. Figure \ref{fig:CE-e}
shows that if an electroactive bacteria can grow at the lowest substrate
concentration, then solutions may converge to an electroactive-only
equilibrium.
\begin{figure}[H]
	\centering
	\begin{subfigure}[b]{0.475\textwidth}
		\centering
		\includegraphics[scale=0.45]{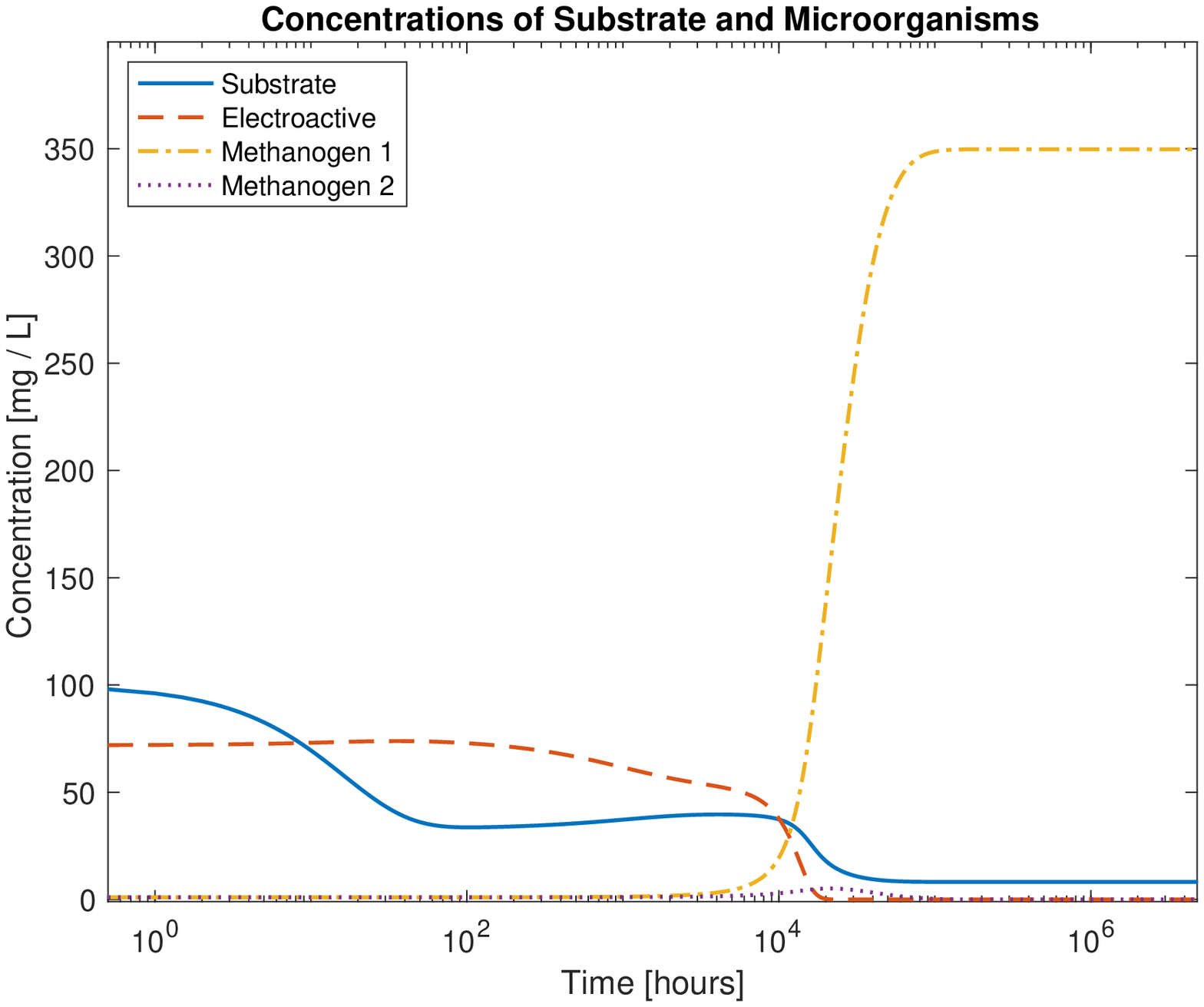}
		\caption{Competitive exclusion by methanogen 1. \label{fig:CE-m1}}
	\end{subfigure}
	\hfill	
	\begin{subfigure}[b]{0.475\textwidth}
		\centering
		\includegraphics[scale=0.45]{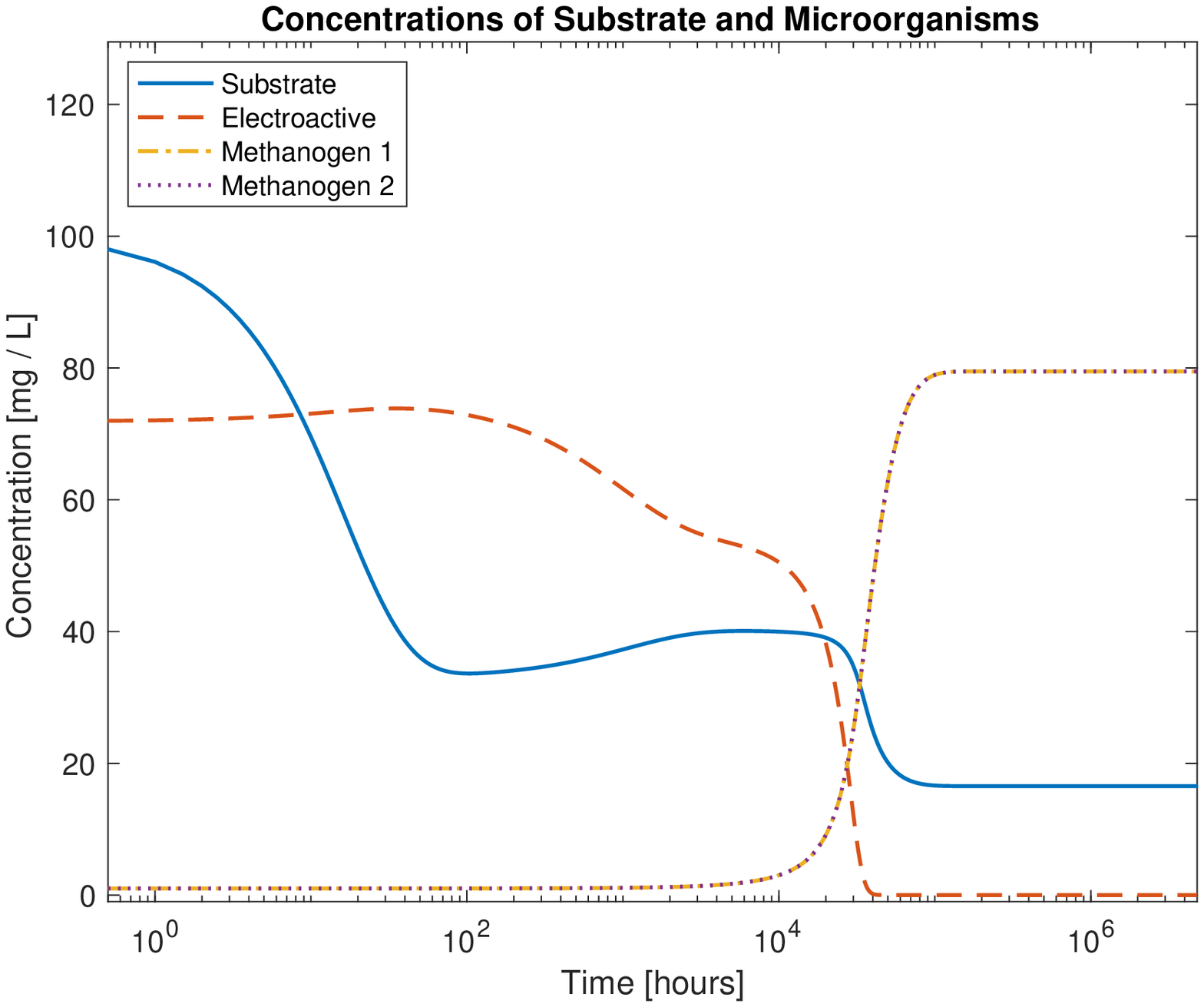}
		\caption{Coexistence of methanogens. \label{fig:Coexistence}}
	\end{subfigure}
	\vskip\baselineskip
	\begin{subfigure}[b]{0.475\textwidth}
		\centering
		\includegraphics[scale=0.45]{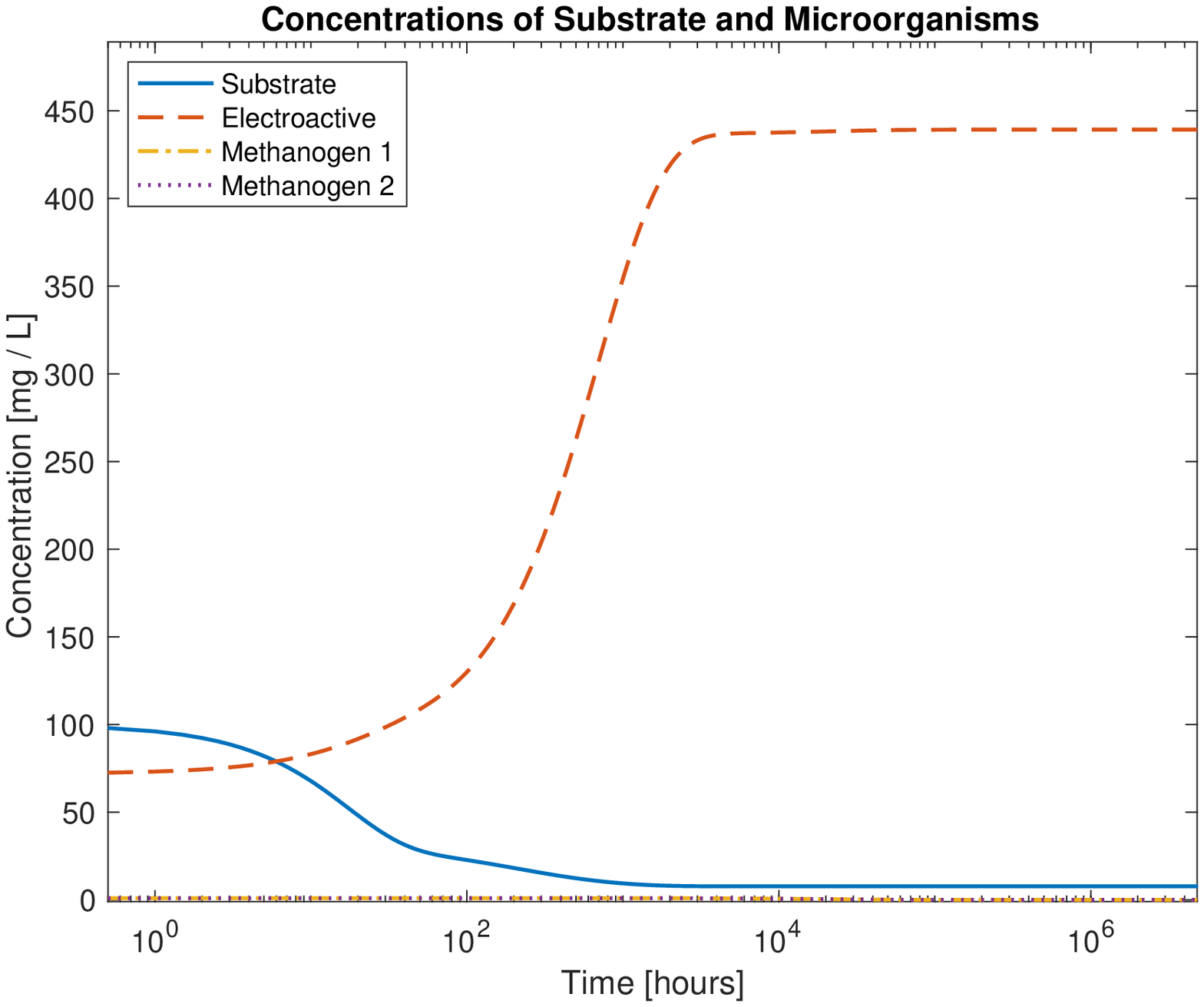}
		\caption{Competitive exclusion by electroactive bacteria. \label{fig:CE-e}}
	\end{subfigure}
	\caption{Solutions for substrate and microorganism concentration on a semi-logarithmic
plot. Substrate is depicted by a solid blue line, methanogen concentrations
are shown by a dashed-dot yellow line and a dotted purple line, respectively,
and electroactive concentration is represented as a dashed red line.
Our simulations corroborate Theorem \ref{thm:competitive-exclusion-by-methanogens}
and Corollary \ref{thm:(Coexistence-of-Methanogens)} because the
methanogen(s) that can grow at the lowest substrate concentration
are the only survivors in Subfigures (\ref{fig:CE-m1}) - (\ref{fig:Coexistence}). We have not provided
an analogous result for electroactive bacteria. Subfigure (\ref{fig:CE-e}) shows
that competitive exclusion by electroactive bacteria may occur when
these bacteria can grow at the lowest substrate concentration. However,
there may be unusual cases when this is not true, as discussed for
the simple model in Section \ref{sec:Simplified-model}. \label{fig:Numerical-solutions}}
\end{figure}

\section{Conclusion \label{sec:Conclusion}}

In Section \ref{sec:Simplified-model}, we characterized local asymptotic
stability of equilibria in a model with one species of each type,
equal decay rates, general monotone kinetics, and a general constraint.
Subsequently, in Section \ref{sec:Monod-model}, we showed that competitive
exclusion by methanogens is globally asymptotically stable in a model
with finitely many species, multiplicative Monod kinetics, different
decay rates, and a constraint based on the Nernst and Butler-Volmer
equations. Our results also show that certain operating conditions
should be avoided. In both models, if a methanogen species can grow
at the lowest substrate value, then competitive exclusion by methanogens
is either locally or globally asymptotically stable. These results
on competitive exclusion provide a recipe for MEC operation that offers
the best chance for long term electrical current and hydrogen production:
\begin{enumerate}
\item Determine which microbe can grow at the lowest substrate concentration
to ensure that methanogens will not outcompete eventually. Theorem
\ref{thm:competitive-exclusion-by-methanogens} from Section \ref{sec:Monod-model}
indicates that methanogens will competitively exclude the other microbe
species if they can survive at the lowest substrate concentration.
\item Compute the spectrum of the matrix pencil at the electroactive-only
equilibrium to ensure that it is locally asymptotically stable. Recall
from Case 3 in Section \ref{sec:Simplified-model} that if one of
the electroactive bacteria has zero net growth at the lowest substrate
concentration, $\lambda_{e,1}(m^{*})$, and that the discriminant
(\ref{eq:discriminant}) satisfies either $\delta<0$ or $\text{Re}\left(\sqrt{\delta}\right)<-\left(x_{e}^{*}\frac{\partial f_{e}}{\partial m}\frac{\partial g}{\partial I}+x_{e}^{*}\frac{\partial f_{e}}{\partial s}\frac{\partial g}{\partial I}+\Gamma\frac{\partial g}{\partial m}\right)\big|_{p_{e}}$,
then competitive exclusion by electroactive bacteria is locally asymptotically
stable. Then solutions near an electroactive-only equilibrium will
approach that equilibrium.
\end{enumerate}
In summary, if electroactive bacteria can survive at the lowest substrate
concentration and the technical condition on the discriminant of the
matrix pencil is satisfied, then electroactive bacteria are most
likely to outcompete methanogens and the MEC is most likely to provide
long term electrical current and hydrogen production. On the other
hand, if the less energy efficient methane production is desirable, one should guarantee that
a methanogen can grow at the lowest substrate concentration; then
the model will exhibit competitive exclusion by methanogens. These
results provide insight about whether the microbial electrolysis system
will produce methane or electric current and resulting hydrogen.

%
%
%
%
%
%
\appendix

\section{Proof of Lemma \ref{lem:positive-bounded} \label{subsec:Proof-of-Lemma}}
\begin{proof} 
(See \cite{wolkowicz_global_1992} for elements of the proof pertaining to $S$ and $X_{\star,j}$.) $S(t)$ is positive because $S(\tau)=0 \Rightarrow\dot{S}(\tau)=DS_{0}>0$.  For $\star \in \{m,e\},$ each $X_{\star,j}(t)$ is positive because boundaries where $X_{\star,j} = 0$ are invariant and cannot be reached in finite time if $X_{\star,j}(0)$ is positive.  $S(t)$ and $X_{\star,j}(t)$ coordinates are bounded because of the following. Define
\[
  \Sigma \coloneqq S + \sum_{j=1}^{n_{m}} \frac{X_{m,j}}{y_{m,j}} + \sum_{j=1}^{n_{e}} \frac{X_{e,j}}{y_{e,j}}.
\]
Let $\bar{D}$ be the minimum of $D$ and each $K_{d,*,j}$. Then
\[
  \dot{\Sigma}\leq DS_{0} - \bar{D}\Sigma.
\]
Thus, $S$ and each $X_{\star,j}$ is positive and bounded.  In fact, $S(t) \leq S_0$ because $S(t^*)=S_{0} \Rightarrow \dot{S}(t^*)\leq0$.

$M$ and $I$ are positive and bounded because of the algebraic constraint (\ref{eq:current}). $M$ cannot be negative because $M=0\Rightarrow\dot{M}=\gamma I\geq0$. Exponentiate both sides
of (\ref{eq:current}) and solve for $M$
to obtain 
\begin{equation}
M=M_{\text{total}}\left[1-\text{exp}\left(-\frac{mF}{RT}\left[\Delta E-2\frac{RT}{mF}\text{arcsinh}\left(\frac{I}{2 A_{\text{sur},A}I_{0}}\right)-IR_{\text{int}}(\mathbf{X}_{e})\right]\right)\right].\label{eq:M-is-bounded}
\end{equation}
Since the exponential is positive, $M<M_{\text{total}}$. In fact,
there is an even tighter bound. Define 
\[
M_{0}\coloneqq M_{\text{total}}\left[1-\text{exp}\left(-\frac{mF}{RT}\Delta E\right)\right].
\]
From equation (\ref{eq:M-is-bounded}),
\begin{align*}
I>0 & \Longleftrightarrow M<M_{0},\\
I=0 & \Longleftrightarrow M=M_{0},\\
I<0 & \Longleftrightarrow M>M_{0}.
\end{align*}
Also, equation (\ref{eq:mediator}) tells us that $I=0\Longleftrightarrow\dot{M}\leq0$.
Thus, $I(0)\geq0\Rightarrow M(0)\leq M_{0}\Rightarrow M(t)\leq M_{0}$.

$I(t)$ must be positive because $0<M(t)< M_{0}$. It remains to
show that $I(t)$ is bounded. Because $M(t)< M_{0}$ we know that
\[
1<\frac{M_{\text{total}}}{M_{\text{total}}-M(t)}\leq\frac{M_{\text{total}}}{M_{\text{total}}-M_{0}}=\text{exp}\left(\frac{mF}{RT}\Delta E\right)
\]
or
\[
0<\ln\left(\frac{M_{\text{total}}}{M_{\text{total}}-M(t)}\right)\leq\ln\left(\frac{M_{\text{total}}}{M_{\text{total}}-M_{0}}\right)=\frac{mF}{RT}\Delta E.
\]
Since $\text{arcsinh}(\frac{I}{2I_{0}})>0$ when $I>0$ and $R_{\text{min}}\leq R_{\text{int}}(\mathbf{X}_{e})\;\leq R_{\text{max}}$,
it must be true that 
\[
0=\frac{1}{R_{\text{max}}}\left[\Delta E-\frac{RT}{mF}\ln\left(\frac{M_{\text{total}}}{M_{\text{total}}-M_{0}}\right)\right]\leq I(t)\leq\frac{\Delta E}{R_{\text{min}}}.
\]
Thus, $M$ and $I$ are positive and bounded. Let $\Omega_1$ be the set where these variables are bounded. To be consistent, solutions of the DAE must lie in the closed set $G$. Thus, $\Omega_G \coloneqq \Omega_1 \cup G$ is positively invariant for (\ref{eq:substrate}) -  (\ref{eq:current}).
\end{proof}

\section{Proof of Lemma \ref{lem:Extinction}\label{subsec:Proof-of-Lemma-1}}
\begin{proof}
(See \cite{hsu_mathematical_1977} for the proof when $\mathbf{X}_{e}(t)=\mathbf{0}$
and there is no mediator or current.) If $\lambda_{m,j}\leq0$,
then $\mu_{\text{max},m,j}\leq K_{d,m,j}$. We can rearrange
the integral representation of $X_{m,j}$ and use the fact that
$S(t)\leq S_{0}$ 
to get 
\begin{align*}
X_{m,j}(t)\leq & X_{m,j}(0)\text{exp}\left(\int_{0}^{t}\frac{(\mu_{\text{max},m,j}-K_{d,m,j})S(\tau)-K_{d,m,j}K_{S,m,j}}{K_{S,m,j}+S(\tau)}d\tau\right)\\
\leq & X_{m,j}(0)\text{exp}\left(-\int_{0}^{t}\frac{K_{d,m,j}K_{S,m,j}}{K_{S,m,j}+S^{(0)}}d\tau\right)=X_{m,j}(0)\text{exp}\left(\frac{-K_{d,m,j}K_{S,m,j}}{K_{S,m,j}+S^{(0)}}t\right).
\end{align*}
$X_{m,j}(t)$ is positive and bounded by a decaying exponential,
so $\lim_{t\rightarrow\infty}X_{m,j}(t)=0$. 

If $\lambda_{m,j}>S_{0}$,
then $\mu_{\text{max},m,j}>K_{d,m,j}$. We can rearrange the
integral representation of $X_{m,j}$ and use the same fact as
before to get
\begin{align*}
X_{m,j}(t)\leq & X_{m,j}(0)\text{exp}\left(\int_{0}^{t}\left(\frac{\mu_{\text{max},m,j}-K_{d,m,j}}{K_{S,m,j}+S(\tau)}\right)\left(S(\tau)-\lambda_{m,j}\right)d\tau\right)\\
\leq & X_{m,j}(0)\text{exp}\left(\int_{0}^{t}\left(\frac{\mu_{\text{max},m,j}-K_{d,m,j}}{K_{S,m,j}+S^{(0)}}\right)\left(S^{(0)}-\lambda_{m,j}\right)d\tau\right).
\end{align*}
$X_{i,j}(t)$ is positive and bounded by a decaying exponential, so
$\lim_{t\rightarrow\infty}X_{m,j}(t)=0$.

Now consider $X_{e,j}$. If $\lambda_{e,j}(M)\leq0$ for all $M\in(0,M_{0})$,
then $\mu_{\text{max},e,j}\left(\frac{M}{K_{M,j}+M}\right)\leq K_{d,e,j}$
and $\mu_{e,j}(S,M)\leq K_{d,e,j}$ for all $S\in\left(0,S_{0}\right]$
and $M\in(0,M_{0})$. Thus,
\begin{align*}
X_{e,j}(t)\leq & X_{e,j}(0)\text{exp}\left(\int_{0}^{t}\left[\mu_{e,j}(S,M)-K_{d,e,j}\right]d\tau\right)\\
\leq & X_{e,j}(0)\text{exp}\left(-\int_{0}^{t}K_{d,e,j}d\tau\right)=X_{e,j}(0)\text{exp}\left(-K_{d,e,j}t\right).
\end{align*}
$X_{e,j}(t)$ is positive and bounded by a decaying exponential, so
$\lim_{t\rightarrow\infty}X_{e,j}(t)=0$. If instead $\lambda_{e,j}(M)>S_{0}$
for all $M\in(0,M_{0})$, then $\mu_{\text{max},e,j}\left(\frac{M}{K_{M,j}+M}\right)>K_{d,e,j}$
and $S_{0}<\lambda_{e,j}(M_{0})$. Rearrange the integral representation
to find
\begin{align*}
X_{e,j}(t)\leq & X_{e,j}(0)\text{exp}\left(\int_{0}^{t}\left[\left(\mu_{\text{max},e,j}\frac{M}{K_{M,j}+M}-K_{d,e,j}\right)\frac{\left(S-\lambda_{e,j}(M)\right)}{\left(K_{S,e,j}+S\right)}\right]d\tau\right)\\
\leq & X_{e,j}(0)\text{exp}\left(\int_{0}^{t}\left[\left(\mu_{\text{max},e,j}-K_{d,e,j}\right)\frac{\left(S_{0}-\lambda_{e,j}\left(M_{0}\right)\right)}{\left(K_{S,e,j}+S_{0}\right)}\right]d\tau\right).
\end{align*}
$X_{e,j}(t)$ is positive and bounded by a decaying exponential, so
$\lim_{t\rightarrow\infty}X_{e,j}(t)=0$.
\end{proof}

\section*{Acknowledgments}
We would like to thank Patrick De Leenheer, Jim Meiss, and Juan Restrepo for their valuable feedback and insight.

\section*{Conflict of interest}

The authors declare that they have no competing interests.

\providecommand{\href}[2]{#2}
\providecommand{\arxiv}[1]{\href{http://arxiv.org/abs/#1}{arXiv:#1}}
\providecommand{\url}[1]{\texttt{#1}}
\providecommand{\urlprefix}{URL }


\end{document}